\newtheorem{rem}{Remark}[section]
\newtheorem{prop}{Proposition}[section]
\newcounter{hypA}
\newenvironment{hypA}{\refstepcounter{hypA}\begin{itemize}
  \item[({\bf A\arabic{hypA}})]}{\end{itemize}}
\newcounter{hypB}
\date{}
\begin{document}

\begin{center}

{\Large \textbf{Computational Methods for a Class of Network Models}}

\vspace{0.5cm}

BY JUNSHAN WANG$^{1}$, AJAY JASRA$^{1}$ \& MARIA DE IORIO$^{2}$

{\footnotesize $^{1}$Department of Statistics \& Applied Probability,
National University of Singapore, Singapore, 117546, SG.}\\
{\footnotesize E-Mail:\,}\texttt{\emph{\footnotesize a0082738@nus.edu.sg staja@nus.edu.sg}}\\
{\footnotesize $^{2}$Department of Statistical Science,
University College, London, WC1E 6BT, UK.}\\
{\footnotesize E-Mail:\,}\texttt{\emph{\footnotesize m.deiorio@ucl.ac.uk}}
\end{center}

\begin{abstract}

In the following article we provide an exposition of exact computational methods to perform parameter
inference from partially observed network models. In particular, we consider the duplication attachment (DA)
model which has a likelihood function that typically cannot be evaluated in any reasonable computational time.
We consider a number of importance sampling (IS) and sequential Monte Carlo
(SMC) methods for approximating the likelihood of the network model for a fixed parameter value.
It is well-known that for IS, the relative variance of the likelihood estimate typically grows at an exponential rate in the time
parameter (here this is associated to the size of the network): we prove that, under assumptions, the SMC method will have relative variance which can grow only polynomially. In order to perform parameter estimation, we develop
particle Markov chain Monte Carlo (PMCMC) algorithms to perform Bayesian inference. Such algorithms use the afore-mentioned
SMC algorithms within the transition dynamics. The approaches are illustrated numerically.\\
\emph{Key words}: Network Models, Sequential Monte Carlo, Markov chain Monte Carlo\\
\end{abstract}

\section{Introduction}

Cellular functions are based on the complex interplay of proteins, therefore understanding the structure and dynamics of these protein-protein interaction (PPI) networks is paramount to gain insight into biological systems. Proteins are at the heart of the relationship between genotype and phenotype and the last years have witnessed large investments to investigate large-scale PPI networks of several model organisms. As a consequence, a significant amount of data has been collected and extensive studies on protein interaction networks have been carried out due not only to technical advances but also to developments in bioinformatic and statistical methods. Probabilistic models are indispensable for characterising the process of protein evolution  and are particularly valuables as they provide a sound basis for likelihood-based inference, as an alternative to statistical analysis based on summary statistics. A number of theoretical models have been developed to explain both the network formation, evolution  and  current structure. A popular class of mathematical models includes the duplication-attachment (DA) models, which specify probability distribution for the inclusion of new nodes and edges in the network. Therefore the network becomes the result of an evolutionary stochastic process, where the number of nodes in the network has increased though a series of node adding events. The probability distributions that govern network evolution depend on a vector of unknown parameter which are usually the main focus of statistical inference. More technical details will be given in section \ref{sec:model}.
  
This article contains an exposition of the challenges associated to parameter inference for network models and it focuses on exact computational methods. The approaches proposed are a valuable alternative to heuristic but fast analysis based on summary statistics, such as approximate Bayesian computation, which often provide approximations of the posterior distributions of the parameter 
which are not well understood. The class of models we consider, have a likelihood,
associated to an observed network  $G_t$ of $t$ vertices, with parameters $\theta\in\Theta\subseteq\mathbb{R}^d$, $d\geq 1$.  
The likelihood can be written as the expectation w.r.t~a probability that sequentially removes the vertices of the network until some terminal state is reached:
$$
L_{\theta}(G_t) = \mathbb{E}_{\theta}\Big[\prod_{k=0}^{t-t_0-1}\Big( H_k(\nu_{k:t};G_t)\Big) \Big]
$$
where $\nu_k$ is a vertex of  $G_t$,  $\mathsf{V}\subseteq\mathbb{N}$ is the vertex set, $\mathsf{E}$ is the edge set,  $H_k:\mathsf{V}^{t-k+1}\times\mathsf{E}\times\mathsf{V}^t\rightarrow\mathbb{R}_+$,
$t-t_0$ is the deterministic number of steps associated to the terminal state 
and the probability associated to the expectation can be written $\prod_{k=0}^{t-t_0-1} p_{\theta}(\nu_k|\nu_{k+1:t})$, $\nu_{k+1:t} = (\nu_{k+1},\dots,\nu_t)$ (if $t\geq k+1$, otherwise it is the null vector). We make the model and terminology precise in Section \ref{sec:model}; see Wiuf et al.~(2006) for examples of such models.
In most scenarios of practical interest one cannot compute the likelihood exactly, unless $t-t_0$ is very small (e.g.~$10$); direct calculation, for a single $\theta$, is an $\mathcal{O}((t-t_0) 2^{t-t_0})$ operation at best - see Wiuf et al.~(2006). Often, in the literature, one resorts to numerical methods based upon Monte Carlo and particularly importance sampling. This
procedure introduces a mutually absolutely continuous probability with joint mass function $\prod_{k=0}^{t-t_0-1} q_{\theta}(\nu_k|\nu_{k+1:t})$ (and associated expectation operator
$\mathbb{E}_{\theta}^q$) and then uses the simple change of measure formula
$$
L_{\theta}(G_t) = \mathbb{E}_{\theta}^q\Big[\prod_{k=0}^{t-t_0-1} \Big(H_k(\nu_{k:t};G_t)w_{k,\theta}(\nu_{k:t})\Big) \Big]
$$
where $w_{k,\theta}(\nu_{k:t}) = p_{\theta}(\nu_k|\nu_{k+1:t})/q_{\theta}(\nu_k|\nu_{k+1:t})$. This idea is adopted in Wiuf et al.~(2006) (see also Guetz \& Holmes (2011), which we discuss later in the article) and indeed, Wiuf et al.~(2006) choose a particularly clever proposal, based upon the optimal importance distribution e.g.~Robert \& Casella (2004). However, as is well-known in the literature, such IS procedures,
with estimates of the likelihood:
$$
\frac{1}{N} \sum_{i=1}^N \prod_{k=0}^{t-t_0-1} \Big(H_k(\nu_{k:t}^i;G_t)w_{k,\theta}(\nu_{k:t}^i)\Big)
$$ 
with $\{\nu_{k:t}^i\}_{1\leq i \leq N}$ sampled i.i.d.~from $q_{\theta}$, are known to perform extremely badly in practice; often the relative variance of such estimates are $\mathcal{O}(\kappa^{t-t_0})$, with $\kappa>1$ (see e.g.~C\'erou et al.~(2011) or Whiteley et al.~(2012)). One method which deals with this issue, at least for some classes of models, is that of sequential Monte Carlo methods.
This algorithm generates a collection of samples (also called particles) in parallel, using the same ideas as IS, except when the weights are too variable, in some sense, the samples are sampled with replacement
from the current particle set and weights reset to 1; see Doucet \& Johansen (2011) for an introduction, and we describe this algorithm in details in Section \ref{sec:comp}. For some classes of models,
estimates of quantities such as $L_{\theta}(G_t)$ have a relative variance of $\mathcal{O}(t-t_0)$; these results are extended for the network models considered in this article and we show that the relative variance will grow only polynomially in $t-t_0$ (Proposition \ref{prop:rel_var}).
 In addition, we consider a more advanced SMC method called the discrete
particle filter (DPF) (Fearnhead, 1998) and illustrate its applicability for likelihood estimation for the given class of network models.

The discussion so far has focussed upon estimation of the likelihood for a single $\theta\in\Theta$. To infer the parameter, we will follow a Bayesian procedure and place a prior probability
distribution on the parameter; we will then seek to sample from the associated posterior distribution using MCMC. This is a particularly challenging problem, as the `obvious'
idea of sampling a posterior proportional to
$$
\prod_{k=0}^{t-t_0} \Big(H_k(\nu_{k:t};G_t)\Big) \prod_{k=0}^{t-t_0} p_{\theta}(\nu_k|\nu_{k+1:t}) p(\theta) 
$$
where $p(\theta)$ is the prior on $\theta$, is very complex (see the discussion in e.g.~Andrieu et al.~(2010) for simpler models). However, one algorithm has been developed for these class of problems in Andrieu et al.~(2010). This method can use any of the SMC
or DPF methods within the proposal mechanism, and we develop such algorithms in Section \ref{sec:comp}.

This article is structured as follows. In Section \ref{sec:model} we consider the model and likelihood for a test model which is considered throughout the article. In Section \ref{sec:comp}
we review and develop computational methods for these network models. This is split into two types; one for approximating the likelihood for a fixed parameter (using SMC) and the other, which uses these afore-mentioned methods, to infer parameters in Bayesian way and is based upon MCMC. 
In Section \ref{sec:comp} we also give our relative variance result for the SMC method.
In Section \ref{sec:numerics} the approaches are numerically implemented and compared on small to medium sized networks. In Section \ref{sec:summary} the article is concluded and some avenues for future work are discussed.
The proof of our main result can be found in the appendix.

\section{Modelling and Likelihood}\label{sec:model}

\subsection{Model}

In the following discussion, we will be dealing with random graphs, so some probabilistic notations are introduced below.
Let $G_t$ be an undirected graph of $t$ vertices, without multiple
edges or self loops. 
DA models are essentially growth models, where the graph only expands through time and never loses nodes or edges. At each time step, a node is added through either a duplication or an attachment event. In the case of an attachment, the new node attaches to an old one, chosen uniformly over the graph, while in the case of a duplication event, an existing node is randomly picked to duplicate, and the new node is created by copying with a certain probability  each of its model's links (independently). The new node is then linked to the existing graph with a specified probability and it is, therefore,  possible that it is added with no links to any existing node. Let $\mathsf{V}\subseteq\mathbb{N}$ be the vertex set with associated $\sigma-$algebra $\mathscr{V}$ let $\mathsf{E}$ be the edge set with associated
$\sigma-$algebra $\mathscr{E}$. Given such a graph $G_t$, let $\delta(G_t,\nu)$,  $\nu\in\mathsf{V}$ denote the graph with $\nu$ deleted (i.e.~both the node and its associated edges). A vertex $\nu$ is said to be \emph{removable} if $G_t$ can be created by copying a vertex in $\delta(G_t,\nu)$. If $G_t$ contains removable nodes it is said to be
\emph{reducible}, otherwise it is \emph{irreducible}.

Let $\theta\in\Theta$ and consider a stochastic process $\{G_t\}_{t\in\{t_0,t_0+1,\dots\}}$ on probability space $(\Omega,\mathscr{F},\mathbb{P}_{\theta})$, with $\Omega = (\mathsf{V}\times\mathsf{E})^{\mathbb{N}}$, $\mathscr{F}=(\mathscr{V}\otimes\mathscr{E})^{\mathbb{N}}$, where for each
$A\in\mathscr{F}$, $\mathbb{P}_{\theta}(A)$ is $\mathcal{B}(\Theta)-$measurable ($\mathcal{B}(\Theta)$ are the borel sets on $\Theta$). A duplication-attachment model
is such a stochastic process that starts at an irreducible graph $G_{t_0}$ and undergoes Markov transitions according to a
transition probability which is only non-zero if $G_{t+1}$ can be obtained by copying a vertex in $G_t$.
In Wiuf et al.~(2006) it is stated that the likelihood associated to a given (reducible) graph $G_t$ can be written in the recursive
manner:
$$
L_{\theta}(G_t) = \frac{1}{t}\sum_{\nu\in\mathcal{R}(G_t)} \omega_{\theta}(\nu,G_t) L_{\theta}(\delta(G_t,\nu))
$$
with $L_{\theta}(G_{t_0})=1$, $\omega_{\theta}(G_t,\nu)=\mathbb{P}_{\theta}(G_t|\delta(G_t,\nu))$ the transition probability and $\mathcal{R}(G_t)$ is the collection of removable
vertices of $G_t$.

Let $\theta=(\pi, p, q, r)$, the DA model follows two transition rules.
\begin{enumerate}
\item Choose and duplicate $\nu_{old}$ in $G_t$ uniformly and call the copy as $\nu_{new} $. Create a link between $\nu_{new}$ and any node that links to $\nu_{old}$ with probability $p$. Link $\nu_{old}$ to $\nu_{new} $ with probability $q$.
 \item Choose and duplicate $\nu_{old}$ in $G_t$ uniformly and call the copy as $\nu_{new} $. Create a link between $\nu_{old}$ and $\nu_{new}$ with probability $r$.
\end{enumerate}
In every transition, we follow rule 1 with probability $\pi$, and rule 2 with probability $1-\pi$.

\subsection{Likelihood Computation}

One observes a reducible graph $G_t$, for which it is possible to obtain an irreducible graph $G_{t_0}$. Define, $\mathsf{V}(G_t)$ as the vertex set of $G_t$ and
for $k\in\{1,1,\dots,t-t_0-1\}$:
$$
\mathsf{V}_{t-k:t}:=\{\nu_{t-k:t}\in\mathsf{V}(G_t):\nu_{t-k}\neq\nu_{t-k+1}\neq\cdots\neq \nu_t\}
$$
for $k\geq 1$
$$
\delta^k(G_t,\nu_{t-k+1:t}) = \delta(\delta^{k-1}(G_t,\nu_{t-k+1:t}),\nu_{t-k+1}) \quad \nu_{t-k+1:t} \in \mathsf{V}_{t-k:t}
$$
with $\delta^0(G_t,\nu_{t+1}):=G_t$ and $\mathsf{V}_{k}:=\mathsf{W}_k(\nu_{t-k+1:t})=\mathcal{R}(\delta^k(G_t,\nu_{t-k+1:t}))$ ($k\in\{1,\dots,t-t_0\}$).
Then it follows that:
\begin{equation}
L_{\theta}(G_t) = \sum_{\nu_{t_0+1:t}\in\mathsf{V}_{t_0+1:t}}\bigg[
\prod_{k=0}^{t-t_0+1}\bigg\{
\frac{\mathbb{I}_{\mathsf{W}_{k}(\nu_{t-k+1:t})}(\nu_{t-k})\omega_{\theta}(\delta^k(G_t,\nu_{t-k+1:t}),\nu_{t-k})}{t-k}
\bigg\}
\bigg].\label{eq:like_vertex}
\end{equation}

\section{Computational Methods}\label{sec:comp}

We now consider a collection of numerical (SMC) techniques, first to approximate the likelihood \eqref{eq:like_vertex} for one $\theta$ and then a method to sample from the marginal posterior
$$
\pi(\theta|G_t) \propto L_{\theta}(G_t) p(\theta)
$$
which uses these SMC approaches.

\subsection{Importance Sampling}\label{sec:is}

The underlying idea of Wiuf et al.~(2006) is to introduce a probability $q_{\theta_0}(\nu_{t_0+1:t})$ on $\mathsf{V}_{t_0+1:t}$, with $\theta_0\in\Theta$, such
that $q_{\theta_0}(\nu_{t_0+1:t})>0$ for each $\nu_{t_0+1:t}\in\mathsf{V}_{t_0+1:t}$. 
Note that $\theta_0$ is termed a driving value and can help one approximate $L_{\theta}(G_t)$ for many $\theta$; see Griffiths \& Tavar\'e (1994). 
The proposal $q_{\theta_0}$ is decomposed as
$$
q_{\theta_0}(\nu_{t_0+1:t})  = \prod_{k=0}^{t-t_0-1} q_{\theta_0}(\nu_{t-k}|\nu_{t-k+1:t})
$$
and hence we have the change of measure formula
$$
L_{\theta}(G_t) = \mathbb{E}^q_{\theta_0}\bigg[
\prod_{k=0}^{t-t_0-1}\bigg\{
\frac{\mathbb{I}_{\mathsf{W}_{k}(\nu_{t-k+1:t})}(\nu_{t-k})\omega_{\theta}(\delta^k(G_t,\nu_{t-k+1:t}),\nu_{t-k})}{(t-k)q_{\theta_0}(\nu_{t-k}|\nu_{t-k+1:t})}
\bigg\}
\bigg]
$$
where $\mathbb{E}^q_{\theta_0}$ is the expectation w.r.t.~$q_{\theta_0}(\nu_{t_0+1:t})$.
The estimate of the likelihood is then:
\begin{equation}
\frac{1}{N} \sum_{i=1}^N \prod_{k=0}^{t-t_0-1}\bigg\{
\frac{\mathbb{I}_{\mathsf{W}_{k}(\nu_{t-k+1:t}^i)}(\nu^i_{t-k})\omega_{\theta}(\delta^k(G_t,\nu^i_{t-k+1:t}),\nu^i_{t-k})}{(t-k)q_{\theta_0}(\nu_{t-k}^i|\nu^i_{t-k+1:t})}
\bigg\}
\label{eq:like_is_exp}
\end{equation}
with $\{\nu^i_{t_0+1:t}\}_{1\leq i \leq N}$ sampled i.i.d.~from the probability $q_{\theta_0}$.

The idea is that given $G_t$ and the ordered list
$\nu_{t-k:t}$ one can easily determine the graph that is obtained after removing $k+1$ vertices; so one can sample the vertices.
A clear point, not mentioned by Wiuf et al.~(2006), is that the conditionally optimal importance sampling proposal 
(that is minimizing the variance of the importance weights, given $\nu_{t-k+1:t}$)
is
\begin{equation}
q_{\theta}(\nu_{t-k}|\nu_{t-k+1:t}) \propto\mathbb{I}_{\mathsf{W}_{k}(\nu_{t-k+1:t})}(\nu_{t-k})\omega_{\theta}(\delta^k(G_t,\nu_{t-k+1:t}),\nu_{t-k})
\label{eq:opt_proposal}
\end{equation}
although, the authors use this proposal in their simulations.
As noted in the introduction, estimates such as \eqref{eq:like_is_exp} often have a relative
variance that is $\mathcal{O}(\kappa^{t-t_0})$, $\kappa>1$. 
The explosion of variance is due to the phenomenon of \emph{weight degeneracy} (see Doucet \& Johansen (2011) and the references therein), which is essentially that the variance of the product term in  \eqref{eq:like_is_exp} generally increases as the number of product terms increases.
This issue is clearly undesirable for any 
problem, even when $t$ is small; see Whiteley et al.~(2012) for some discussion.

The representation \eqref{eq:like_vertex} also allows one to consider a single importance distribution on the space of permutations, associated to the removal of vertices,
as adopted in Guetz \& Holmes (2011). The latter authors identify a clever proposal based upon models on permutations. However, such a technique is likely to fail as $t-t_0$ grows; such
importance sampling methods are often subject to the curse of dimensionality and have a cost of $\mathcal{O}(\kappa^{t-t_0})$ for some $\kappa>1$ - see Bickel et al.~(2008). One idea
that can get around this problem is the work in Del Moral et al.~(2006) (see Beskos et al.~(2011)) as implemented by Guetz \& Holmes (2011), but needs to store $N$ samples of $t-t_0$ vertices in parallel; this is more expensive than the methods to be discussed below.

\subsection{Sequential Monte Carlo}\label{sec:smc}



A potential way to deal with the exponential order of the relative variance of IS, as used
for example for hidden Markov models Doucet \& Johansen (2011), 
is to adopt SMC methods.  This procedure simulates a collection of particles in parallel,
sequentially, and the particles evolve via sampling and resampling. The algorithm is designed
to approximate a sequence of probabilities of increasing dimension.

Define
$$
w_{k}(\nu_{t-k:t}) = \frac{\mathbb{I}_{\mathsf{W}_{k}(\nu_{t-k+1:t})}(\nu_{t-k})\omega_{\theta}(\delta^k(G_t,\nu_{t-k+1:t}),\nu_{t-k})}{(t-k)q_{\theta_0}(\nu_{t-k}|\nu_{t-k+1:t})},
$$
then, consider the algorithm:
\begin{enumerate}
\item{For $i\in\{1,\dots,N\}$ sample $\nu_t^i$ from $q_{\theta_0}(\cdot)$ and calculate $w_{0}(\nu_{t}^i)$. Set $k=0$.}
\item{Normalize the weights $\bar{w}_{k}^i = w_{k}(\nu_{t-k:t}^i)/\sum_{j=1}^N w_{k}(\nu_{t-k:t}^j)$ and resample the particles; denote them
$(\tilde{\nu}_{t-k:t}^i)_{1\leq i \leq N}$. Set $k=k+1$; if $k=t-t_0$ stop.}
\item{For $i\in\{1,\dots,N\}$ sample $\nu_{t-k}^i|\tilde{\nu}_{t-k+1:t}^i$ from $q_{\theta_0}(\cdot|\tilde{\nu}_{t-k+1:t}^i)$ and calculate $w_{k}(\nu_{t-k}^i,\tilde{\nu}_{t-k+1:t}^i)$.
Denote the particles $(\nu_{t-k:t}^i)_{1\leq i \leq N}$ and return to 2.}
\end{enumerate}

The estimate of the likelihood (computed after steps 1.~and 3.) is
\begin{equation}
L_{\theta}^N(G_t) = \prod_{k=0}^{t-t_0-1} \bigg[\frac{1}{N}\sum_{i=1}^N w_{k}(\nu_{t-k:t}^i)\bigg]
\label{eq:smc_like}
\end{equation}
and is unbiased for any fixed $N$ (Del Moral, 2004). 
We remark that the algorithm provides consistent estimates as $N$ grows; see Del Moral (2004) 
for details. We have the following result, whose proof is given in the appendix. The notations and assumptions are also fully described in the appendix. The expectation below, is w.r.t.~the process associated to SMC 
algorithm just described.

\begin{prop}\label{prop:rel_var}
Assume (A\ref{assumption}). Then if $N> \xi(t-t_0) \sum_{k=0}^{t-t_0-1} \xi_k(t-t_0) $, we have
$$ 
\mathbb{E}\bigg[\Big(\frac{L_{\theta}^N(G_t)}{L_{\theta}(G_t)}-1\Big)^2\bigg] \leq  \frac{4\xi(t-t_0)}{N}\sum_{k=0}^{t-t_0-1} \xi_k(t-t_0)
$$
where $L_{\theta}^N(G_t)$ is as \eqref{eq:smc_like}.
\end{prop}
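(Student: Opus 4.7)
\medskip

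\noindent\textbf{Proof plan.} The plan is to reduce the relative mean square error to a telescoping sum of local one-step errors, to bound each local error via the mixing-type constants from assumption (A1), and finally to handle the higher-order cross terms through a Cauchy--Schwarz/truncation argument that invokes the hypothesis on $N$. This is the route taken in the SMC literature (see Del Moral (2004), C\'erou et al.\ (2011) and Whiteley et al.\ (2012)), adapted here to the combinatorial state space of vertex-removal sequences.

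First, I would recall that $L_\theta^N(G_t)$ in \eqref{eq:smc_like} is unbiased, so the quantity of interest equals $\mathrm{Var}(L_\theta^N(G_t))/L_\theta(G_t)^2$. Introducing the Feynman--Kac normalizing constants
\[
Z_k := \prod_{j=0}^{k-1}\mathbb{E}^q_{\theta_0}\bigl[w_j(\nu_{t-j:t})\bigr], \qquad Z_k^N := \prod_{j=0}^{k-1}\frac{1}{N}\sum_{i=1}^N w_j(\nu_{t-j:t}^i),
\]
one has $Z_{t-t_0}=L_\theta(G_t)$ and $Z_{t-t_0}^N=L_\theta^N(G_t)$. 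Writing the ratio $Z_{t-t_0}^N/Z_{t-t_0}$ as a telescoping product of the normalized SMC approximations $\eta_k^N$ of the Feynman--Kac flow $\eta_k$, and then applying the standard decomposition
\[
\frac{L_\theta^N(G_t)}{L_\theta(G_t)}-1 \;=\; \sum_{k=0}^{t-t_0-1} \Bigl(\frac{Z_{k+1}^N}{Z_{k+1}}\frac{Z_k}{Z_k^N}-1\Bigr)\frac{Z_k^N}{Z_k},
\]
yields a martingale-difference structure with respect to the filtration generated by the particle system.

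Second, I would control each local increment. Conditional on the history, the inner factor $Z_{k+1}^N Z_k / (Z_{k+1} Z_k^N) - 1$ is an empirical-average deviation of $N$ i.i.d.\ samples (via the Marcinkiewicz--Zygmund inequality for the proposal step, combined with the standard conditional variance bound for the multinomial resampling step), so its conditional $L^2$ norm is $O(N^{-1/2})$ with a pre-factor controlled by the local mixing constant $\xi_k(t-t_0)$ supplied by (A1). Squaring the telescoping sum and expanding, the diagonal terms contribute $N^{-1}\sum_k \xi_k(t-t_0)$ (uniformly in the path through the constant $\xi(t-t_0)$), while the off-diagonal terms vanish in expectation by the martingale property. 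The factor $4$ and the constraint on $N$ arise when bounding the residual multiplicative factor $Z_k^N/Z_k$ in $L^2$: a truncation argument (splitting on the event $\{|Z_k^N/Z_k - 1|\le 1/2\}$ and its complement, and controlling the complement via Chebyshev) closes the induction precisely when $N> \xi(t-t_0)\sum_k \xi_k(t-t_0)$.

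The main obstacle, in my view, is the second step: obtaining the uniform constants $\xi(t-t_0)$ and $\xi_k(t-t_0)$ in the combinatorial graph setting. Unlike the HMM case, the state here is an ordered removal sequence $\nu_{t-k:t}$ and the weight $w_k$ carries an indicator $\mathbb{I}_{\mathsf{W}_k}$ that can cause local degeneracy; one must verify, under (A1), that the conditional expectation operators associated with $q_{\theta_0}$ and $w_k$ are uniformly bounded above and below on their supports, and that these bounds grow only polynomially with $t-t_0$. Once this is established, the summation and the inductive truncation collapse to the stated inequality.
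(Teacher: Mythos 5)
There is a genuine gap, and it sits exactly where you locate the ``main obstacle'' and then defer it. The paper's proof has essentially one piece of content: verifying, from (A\ref{assumption}), the semigroup regularity bound
\[
\sup_{(x,y)\in\mathsf{V}_{t-k:t}^2}\frac{Q_{k,n}(1)(x)}{Q_{k,n}(1)(y)}\;\leq\;\xi(t-t_0)\,\xi_{k+1}(t-t_0),
\]
after which the stated inequality (including the factor $4$ and the condition $N>\xi(t-t_0)\sum_k\xi_k(t-t_0)$) is read off from Theorem 5.1 and Corollary 5.2 of C\'erou et al.\ (2011); there is no need to re-derive the variance theorem. The verification is done by noting $Q_{k,n}(\varphi)=Q_{k+1}(Q_{k+2,n}(\varphi))$ and proving that for \emph{every} positive bounded $\varphi$ one has $M_{k+1}(\varphi)(x)/M_{k+1}(\varphi)(y)\leq\xi_{k+1}(t-t_0)$, despite the fact that $M_{k+1}(x,\cdot)$ and $M_{k+1}(y,\cdot)$ have different (disjoint) supports $\{x\}\times\mathsf{W}_{k+1}(x)$ and $\{y\}\times\mathsf{W}_{k+1}(y)$; the trick is to multiply the pointwise bound $M_{k+1}(x,u)\leq\xi_{k+1}(t-t_0)M_{k+1}(y,v)$ by $\varphi(v)$ and sum over $v$. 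Your proposal acknowledges that such a uniform control must be established (``Once this is established\dots'') but supplies no argument for it, so the model-specific heart of the proof is missing.

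There is also a technical flaw in the route you sketch for the generic SMC part. In your telescoping decomposition the increments are \emph{not} martingale differences: conditionally on the particle history, $\mathbb{E}[\eta_k^N(G_k)\mid\mathcal{F}_{k-1}]=\Phi_k(\eta_{k-1}^N)(G_k)\neq\eta_k(G_k)$, because the time-$(k-1)$ empirical measure is itself a biased approximation of $\eta_{k-1}$; so the off-diagonal terms do not vanish in expectation as claimed. The correct decomposition (the one underlying Del Moral (2004) and C\'erou et al.\ (2011)) uses the semigroup-corrected test functions $Q_{k,n}(1)$, i.e.\ the martingale $\gamma_k^N(Q_{k,n}(1))/\gamma_n(1)$, and this is precisely why the quantity that must be bounded is the \emph{multi-step} ratio $Q_{k,n}(1)(x)/Q_{k,n}(1)(y)$ and not merely local one-step constants. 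Repairing your argument therefore leads you back to the displayed bound above, which is what the paper proves and what your write-up leaves open.
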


\begin{rem}\label{rem:rel_var}
The constants $\xi(t-t_0),\xi_k(t-t_0)$ depend upon the number of removable nodes. In the case that $q(\cdot|\nu_{t-k+1:t})$ is uniform on the number of removable nodes, one can show that $\xi_k(t-t_0)=t-t_0-k$ and that $\xi(t-t_0)\leq t-t_0$;
so if $N>(t-t_0)^3$, then the relative variance of the likelihood estimate grows at most as $(t-t_0)^3$. This is opposed to the exponential order for IS. In general, one does not
expect the relative variance to grow linearly, as is the case for many other models; we explain this below.
\end{rem}

\subsubsection{Some Remarks}

The SMC algorithm targets the sequence of probabilities:
$$
\pi_k(\nu_{t-k:t}) \propto
\prod_{j=0}^{k}\bigg\{
\frac{\mathbb{I}_{\mathsf{W}_{j}(\nu_{t-j+1:t})}(\nu_{t-j})\omega_{\theta}(\delta^k(G_t,\nu_{t-j+1:t}),\nu_{t-j})}{(t-j)}
\bigg\}.
$$
If one uses the optimal proposal \eqref{eq:opt_proposal}, then importance weights are proportional to
$$
\frac{\sum_{\nu_{t-k}}\pi_k(\nu_{t-k:t})}{\pi_k(\nu_{t-k+1:t})}.
$$
The algorithm may not perform well if there is a significant discrepancy between these two probabilities. In practice one does not expect this to be the case, but this issue could be dealt with by
the approach in Doucet et al.~(2006). We note that at time $k$, one does not need to store $N$ trajectories of length $k$ (the reasons for which are rather technical); see Jacob et al.~(2013) for details.

The resampling mechanism in point 2., is the operation of sampling the particles with replacement according the current collection of weights. There are a wide variety of techniques to perform resampling and we will use the stratified resampling approach; see Doucet \& Johansen (2011) and the references therein for details. Resampling generally deals quite well with the weight degeneracy problem, but induces another problem, called \emph{path degeneracy}.
When resampling at every time step, at given time $k$ reasonably large (relative to $t-t_0$) the nodes that have been removed at the start (i.e.~say $\nu_{t-s:t}$, for some $s$)
are almost the same for each particle; this is because one never changes these values and this is the path-degeneracy problem (see Doucet \& Johansen (2011) for further details). 
This is why we do not expect that the result in Proposition \ref{prop:rel_var} to lead to a linear growth of the relative variance.
However, as the algorithm evolves on a finite state-space the variance of the $w_k$
can be well-behaved.
 In this scenario, the dynamic resampling
approach (e.g.~Del Moral et al.~(2012)) will partially alleviate the path degeneracy issue, which should not be overly troublesome for this problem. Here one will only resample when the weights are sufficiently variable; one way to meaure this is via the effective sample size (ESS):
$$
\frac{1}{\sum_{i=1}^N\bar{w}_{k}^i}.
$$
This is a number between $1$ and $N$ and generally resampling occurs if $\textrm{ESS}<N/2$. The ESS is simply an indication of the perfomance of the algorithm
and is not a fool-proof measure. For example, all of the samples may be in equally `bad' parts
of the state-space, leading to an ESS which is very high, but here the algorithm is not performing well. This issue will also manifest itself in the context of network models, because,
initially and close to reaching the irreducible network, many samples could be similar. As an additional performance indicator, we consider the number of unique particles, which can help
to determine how well the SMC algorithm is performing. Throughout the article, we use the SMC algorithm with dynamic resampling (unless otherwise stated), and we remark that the estimate of the normalizing constant is still unbiased. This estimate is as follows; suppose one resamples $s$ times at times $r_1,\dots,r_s$ and let $r_0+1=0$ and $r_{s+1}=t-t_0-1$, then
we have the estimate
$$
L_{\theta}^N(G_t) =\prod_{k=0}^s \frac{1}{N} \sum_{i=1}^N \prod_{j=r_{k}+1}^{r_{k+1}} w_{j}(\nu_{t-j:t}^i).
$$
As one expects the algorithm with dynamic resampling to perform better than resampling at each time-point; the relative variance of this estimate should not be worse than $\mathcal{O}((t-t_0)^3)$ as discussed above.


\subsection{Discrete Particle Filter} \label{sec:dpf}

As mentioned, the simulation (either IS or SMC) will evolve on a finite state-space.
The SMC method should deal with the exponential order of the relative variance of the likelihood estimate, but does not necessarily exploit the nature of the state-space; one may want to consider an alternative approach.
 In this scenario, one can use the discrete particle filter modified to the current
situation. The algorithm is now described.

\begin{enumerate}
\item{Set $S_0=\mathcal{R}(G_t)$, for each $\nu_t\in S_0$ compute
$$
w_0(\nu_t) = \frac{1}{t}\omega_{\theta}(G_t,\nu_t) \quad \bar{w}_0(\nu_t) =  \frac{w_0(\nu_t)}{\sum_{\nu_t\in S_0} w_0(\nu_t)}.
$$
}
\item{At times $1\leq k \leq t-t_0-1$
\begin{enumerate}
\item{If $\textrm{Card}(S_{k-1})\leq N$, set $S_{k-1}'=S_{k-1}$, $C_{k-1}=\infty$ and go to (b). Otherwise, perform the resampling step described below, which returns $S_{k-1}'$ and $C_{k-1}$.}
\item{Set $S_k=\{\nu_{t-k:t}:\nu_{t-k+1:t}\in S_{k-1}',\nu_{t-k}\in \mathsf{W}_k(\nu_{t-k+1:t}) \}$}
\item{For each $\nu_{t-k:t}\in S_k$ compute
$$
w_k(\nu_{t-k:t}) = \frac{1}{t-k}\omega_{\theta}(\delta^k(G_t,\nu_{t-k+1:t}),\nu_{t-k})\frac{w_{k-1}(\nu_{t-k+1:t})}{1\wedge C_{k-1} w_{k-1}(\nu_{t-k+1:t})} 
$$
$$
\bar{w}_k(\nu_{t-k:t}) =  \frac{ w_k(\nu_{t-k:t}) }{ \sum_{\nu_{t-k:t}\in S_k} w_k(\nu_{t-k:t}) }.
$$
}
\end{enumerate}
}
\end{enumerate}

In part 2 (a), we have the following procedure.
\begin{itemize}
\item{Set $C_{k-1}$ to be the unique solution of
$$
\sum_{\nu_{t-k+1:t}\in S_{k-1}} 1\wedge C_{k-1} w_{k-1}(\nu_{t-k+1:t}) = N.
$$
}
\item{Keep the $L_{k-1}$, $\nu_{t-k+1:t}$ whose weights are greater than $1/C_{k-1}$. For the remaining $\textrm{Card}(S_{k-1})-L_{k-1}$ particles perform the following stratified resampling scheme.}
\item{Normalize the weights $\bar{w}_{k-1}(\nu_{t-k+1:t})$ of the remaining $|S_{k-1}|-L_{k-1}$ and label them to obtain the normalized weights $\hat{w}_{k-1}(\nu_{t-k+1:t}^i)$, $i\in\{1,\dots,\textrm{Card}(S_{k-1})-L_{k-1}\}$.}
\item{Construct the CDF: for $i\in\{1,\dots,\textrm{Card}(S_{k-1})-L_{k-1}\}$
$$
Q_{k-1}(i) := \sum_{j=1}^i \hat{w}_{k-1}(\nu_{t-k+1:t}^j), \quad  Q_{k-1}(i) :=0.
$$
}
\item{Sample $U_1$ uniformly on $[0,1/(N-L_{k-1})]$ and set $U_j=U_1 + (j-1)/(N-L_{n-1})$, $j\in\{2,\dots,N-L_{k-1}\}$}
\item{For $i\in\{1,\dots,\textrm{Card}(S_{k-1})-L_{k-1}\}$, if there exist a $j\in\{1,\dots,N-L_{k-1}\}$ such that $Q_{k-1}(i-1)<U_j\leq Q_{k-1}(i)$, then $\nu_{t-k+1:t}^i$ survives.}
\item{Set $S_{k-1}'$ to be the set of surviving particles from the resampling and the $L_{k-1}$ samples that were maintained.}
\end{itemize}

The only stochasticity in the algorithm is brought about by the resampling mechanism. The finite state-space of the samples is exploited
by deterministically diversifying the particles, forcing them to explore each part of the state-space.
This algorithm differs slightly from the standard DPF in that the support for a given particle may be different than another.
It is exact, if $\textrm{Card}(S_{t-t_0-2})\leq N$, but this is unlikely to be possible in practice; as we noted earlier the exact calculation is worse than exponential order in $t-t_0$.
One can show, using the same reasoning as Whiteley et al.~(2010) that the estimate:
$$
L_{\theta}^N(G_t) = \prod_{k=0}^{t-t_0-1}  \sum_{\nu_{t-k:t}\in S_k } w_k(\nu_{t-k:t})
$$
is unbiased for any fixed $N$.
In general, for a large network, this algorithm may be very expensive to implement initially. However, it can be used, once the network is sufficiently small (e.g.~half way through an SMC algorithm).
The algorithm has been shown to be rather efficient relative to other methods in the context of switching state-space models. We expect that this method will work rather well for moderate size networks and perhaps much better than SMC; thus we do not analyze the relative variance of the estimate of the likelihood as this is expected to be at least as good as SMC.

\subsection{Parameter Estimation}\label{sec:pmcmc}

In the discussion above, we have reviewed and developed methods for likelihood estimation, for one fixed $\theta$, we will now show how these approaches can be used to perform Bayesian parameter estimation.
Let $p(\theta)$ be a proper prior density on $\Theta$, then Bayesian parameter inference is concerned with the posterior
$
\pi(\theta|G_t) \propto L_{\theta}(G_t) p(\theta).
$
As we have remarked, the direct calculation of $L_{\theta}(G_t)$ is not possible, so for example,
if one wanted to perform Monte Carlo estimation associated to the posterior (which is often the only amenable way to do inference) then one could not implement standard IS or MCMC algorithms, which will require the evaluation of $L_{\theta}(G_t)$.

A recently evolving MCMC methodology in computational statistics turns out to be rather useful for the class of problems of interest. These ideas are exact approximations of idealized MCMC algorithms; in our context, the ideal algorithm would be simply to sample from the posterior using e.g.~Metropolis-Hastings or the Gibbs sampler. Since either procedure would require us to evaluate $L_{\theta}(G_t)$, they are idealized algorithms. However, as noted in 
Andrieu \& Roberts (2009), one need only know the target probability up-to an un-normalized, unbiased estimate. That is, given some auxilliary variable $u\in\mathsf{U}$ with probability density $f_{\theta}(u)$, (with associated probability measure $F_{\theta}(\cdot)$) if one has, for any fixed $\theta\in\Theta$:
$$
\pi(\theta|G_t) \propto \Big(\int_{\mathsf{U}} L_{\theta}(G_t,u)f_{\theta}(u)du\Big) p(\theta)
$$
then one can construct an ergodic MCMC algorithm with target probability
$$
\pi(\theta,u|G_t) \propto L_{\theta}(G_t,u)f_{\theta}(u) p(\theta)
$$
and still obtain a Monte Carlo approximation of the posterior $\pi(\theta|G_t)$.

For each of the algorithms in Sections \ref{sec:smc} and \ref{sec:dpf}, we have remarked that the estimate of the likelihood is unbiased. Thus, for example taking the SMC algorithm and
denoting all the simulated random variables by $u$, one could sample from the target
$\pi(\theta,u|G_t) \propto L_{\theta}^N(G_t,u)f_{\theta}(u) p(\theta)$, where the auxilliary
variables are generated from the SMC algorithm and $L_{\theta}^N(G_t,u)$ is the associated likelihood estimate. This is the idea which has been developed in
Andrieu et al.~(2010) and Whiteley et al.~(2010) (for SMC and the DPF respectively).
The simplified particle marginal Metropolis-Hastings algorithm that we propose is as follows,
where $f_{\theta}(\cdot)$ could be associated to either the SMC algorithm in Section \ref{sec:smc} or
DPF algorithm in Section \ref{sec:dpf}. Below $q(\theta'|\theta)$ is a positive conditional probability density (with probability measure $Q(\cdot|\theta)$).

\begin{enumerate}
\item{Sample $\theta$ from the prior, $U$ from $F_{\theta}(\cdot)$ and compute the likelihood estimate $L_{\theta}^N(G_t,u)$. Set $\theta^0=\theta$ and $i=1$.}
\item{Sample $\theta'|\theta^{i-1}\sim Q(\cdot|\theta^{i-1})$ and $U$ from
$F_{\theta'}(\cdot)$ and compute the likelihood estimate $L_{\theta'}^N(G_t,u')$. Set $\theta^i=\theta'$ with probability:
$$
1\wedge \frac{L_{\theta'}^N(G_t,u')p(\theta')}{L_{\theta^{i-1}}^N(G_t,u)p(\theta^{i-1})}\times
\frac{q(\theta^{i-1}|\theta')}{q(\theta'|\theta^{i-1})}
$$
otherwise set $\theta^i=\theta^{i-1}$. Set $i=i+1$ and return to the start of 2.
}
\end{enumerate}
Under mild conditions, the above algorithm is ergodic; see Andrieu et al.~(2010) or Whiteley et al.~(2010) for details.
One major point is that $N$ is a user-set parameter and this is intrinsically linked to the (relative) variance of the likelihood estimator, which
is important in determining the (good) performance of this algorithm; see Andrieu et al.~(2010).
For network models, according to Proposition \ref{prop:rel_var}, one should take  $N=\mathcal{O}((t-t_0)^3)$ (see Remark \ref{rem:rel_var}). 
See also Doucet et al.~(2012) for results on choosing $N$.

\section{Numerical Illustrations}\label{sec:numerics}

Throughout this Section all code was written in MATLAB.

\subsection{Simulation Results for a Small Network} \label{sec:likelihood sim}

In this Section, we will investigate the methods in Section \ref{sec:comp} and in particular in terms of their accuracy along with some of the phenomona mentioned above. With regards to the data, since the true likelihood is only computable when the size of the network is small, for our purpose, we use the graph in Fig.~1 in Wiuf et al.~(2006). It is an example of a graph generated with parameter $\theta=(1,0.66,0.33,0)$ under the DA model. It has 10 nodes where most nodes are removable, at the end, this graph can be reduced to a single node. We have five subsections in this part, Sections \ref{sim:is}-\ref{sim:dpf} simply display results of the IS method, the SMC method and the DPF method respectively; Section \ref{sim:rv} deals with the relative variance between the true likelihood and each of estimates obtained by the above three methods; then Section \ref{sim:cpu} compares the results of IS, SMC and DPF method obtained in approximately the same computing time. Section \ref{sim:pmcmc} investigates the PMCMC algorithms discussed in Section \ref{sec:pmcmc}.

\subsubsection{Importance Sampling}\label{sim:is}
We apply the IS method of Section \ref{sec:is} to our network model. Here and in the following subsections, we fix three parameters in $\theta=(\pi, p, q, r)$ except $p$ to the true values, i.e., $\pi=1$, $q=0.33$, $r=0$, and $p\in \{0.05, 0.15, \ldots, 0.85\}$. We set the driving value $\theta_0=(1,0.66,0.33,0)$ except in Section \ref{sim:cpu}, where we set $\theta_0=\theta$. Then we estimate the likelihood under these different nine $\theta$ to obtain the estimated likelihood curve with respect to parameter $p$. We run IS 30 times with both $N\in\{1000,10000\}$ respectively, then use the average to be the estimator, and construct a confidence interval. In order to demonstrate the issue about the relative variance of the IS method, we calculate the value of ESS at the end of a single run for each parameter $p\in\{0.05, 0.15, \ldots, 0.85\} $, with $N\in\{100, 1000, 10000\}$. The results are given in Figure \ref{fig:is}.  

From Figure \ref{fig:is}, plots (a) and (b) display the similarity of these two estimated curves: they both approximate the true likelihood curve well, but not exactly. The main difference in
the plots are the expected improved accuracy and reduced variance as $N$ grows.
Plot (c) shows that for each $N\in\{100,1000,1000\}$, the values of the $ESS$ are quite low. As a result, the variance of the un-normalized weights is quite large; this occurs due to the weight degeneracy problem.

\begin{figure}[!tpb]
\centering
\subfigure[$N=1000$]
{\includegraphics[width=15.5cm,height=4cm]{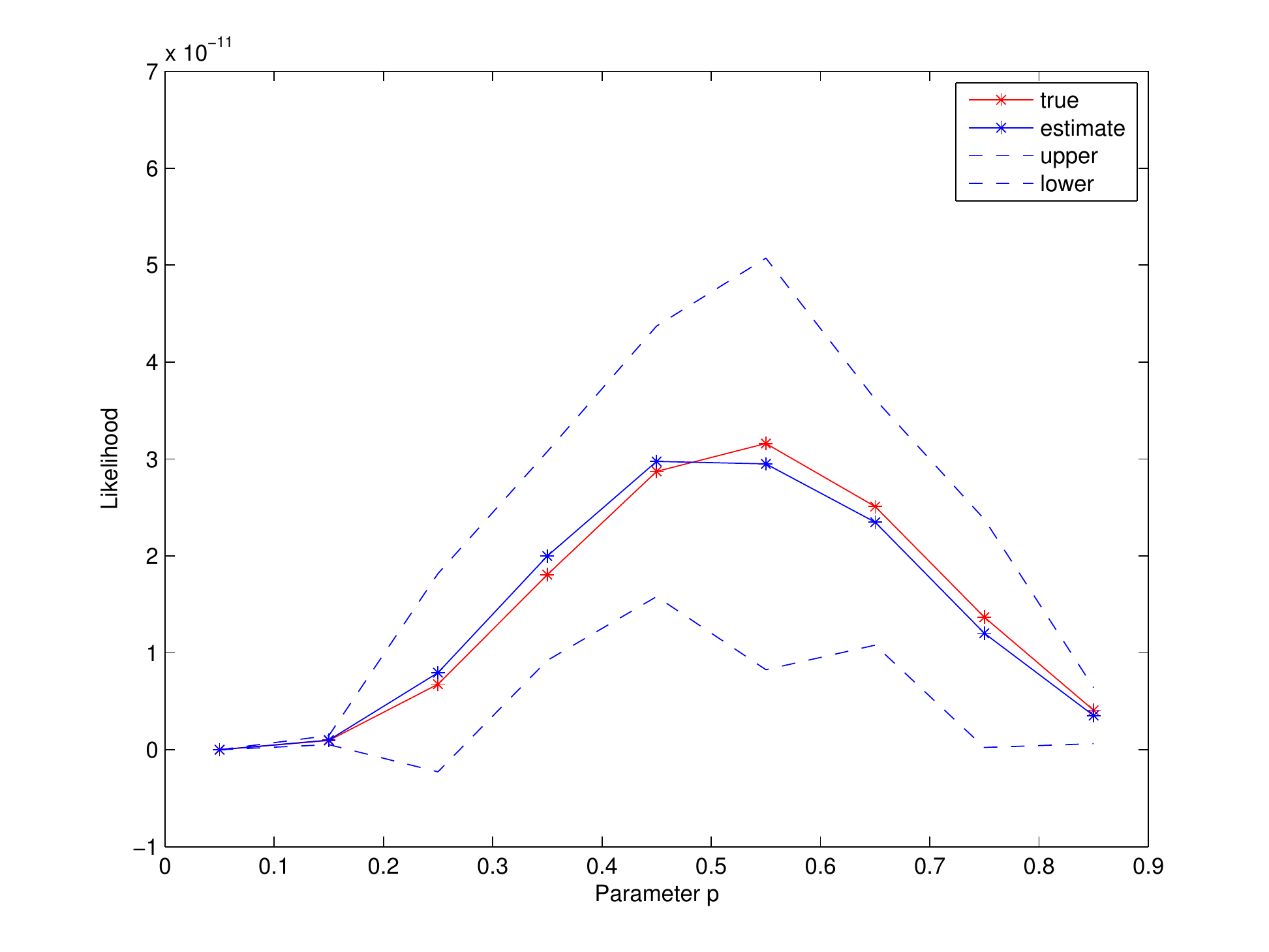}}
\subfigure[$N=10000$]
{\includegraphics[width=15.5cm,height=4cm]{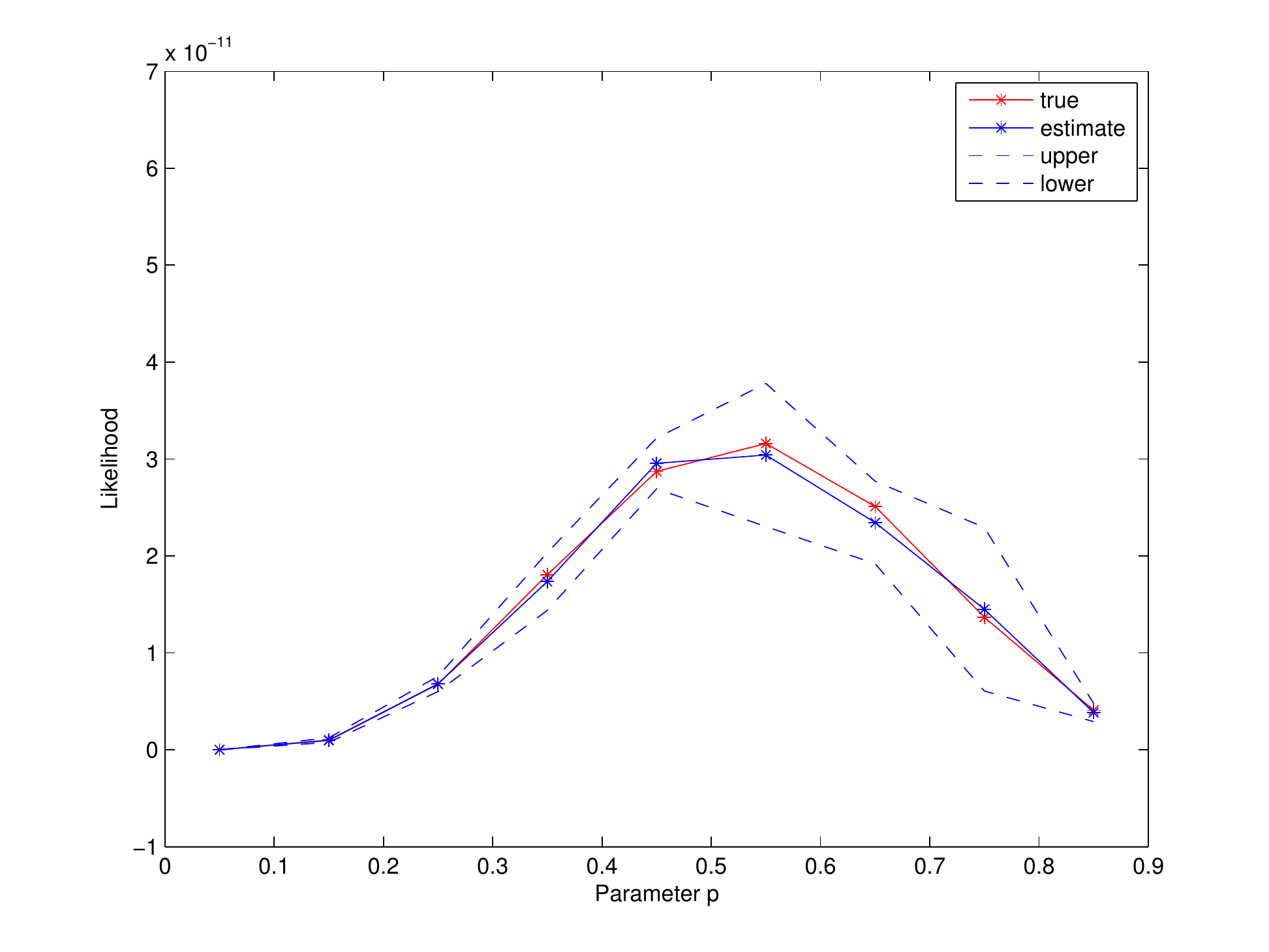}}
\subfigure[ESS]
{\includegraphics[width=15.5cm,height=8cm]{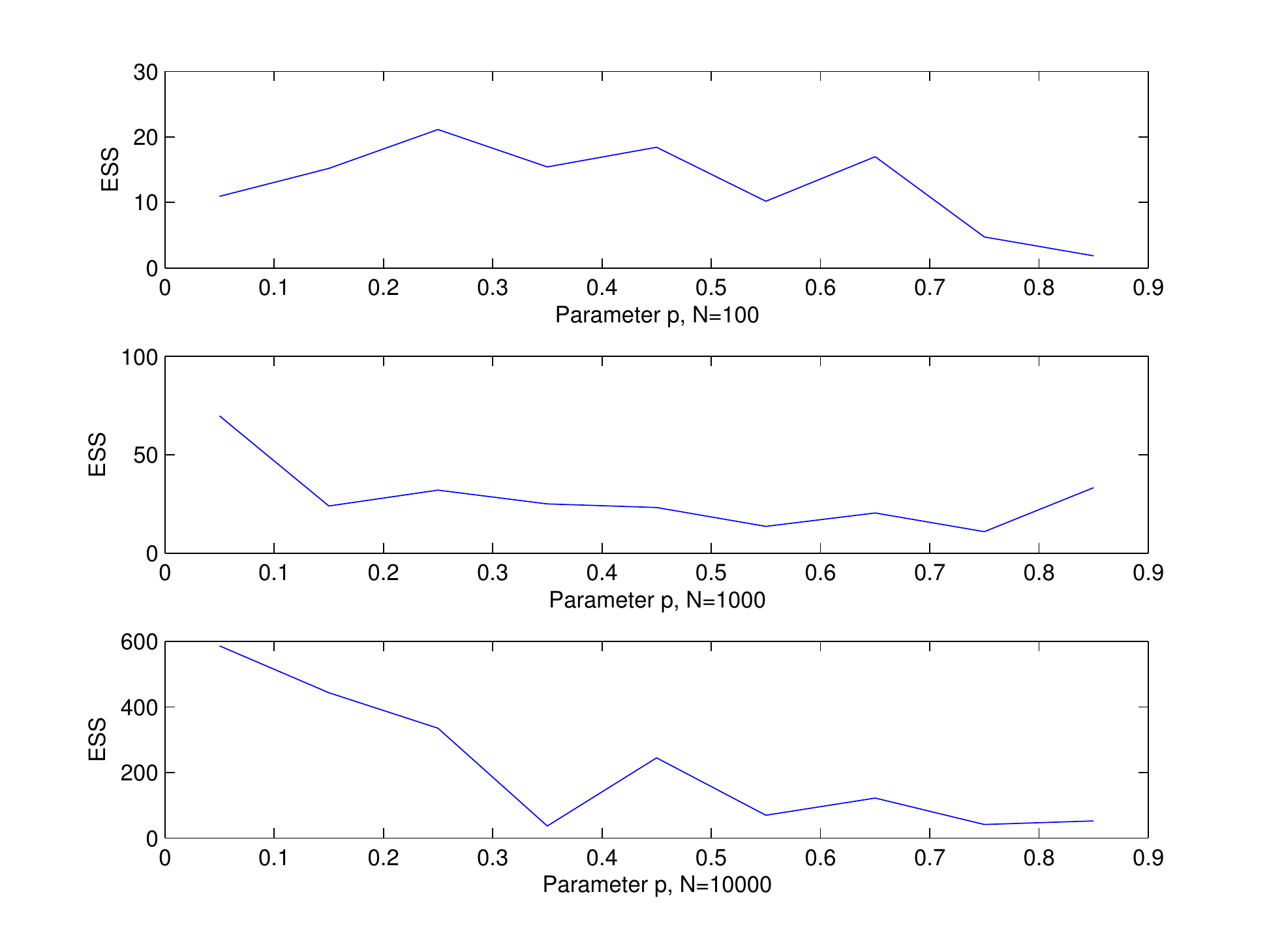}}
\caption{ \small Simulation results of IS algorithm: figures (a) and (b) are plots of estimated likelihood curve of 30 runs under $N=1000$ and $N=10000$ respectively, the red solid line with stars is the true likelihood the blue solid line with stars is the mean of 30 estimates, and the other two blue dashed lines are $\bar{x}-2s$ and $\bar{x}+2s$ ($s$ is the standard deviation accross the runs) respectively; figure (c) are plots of $ESS$ at the end of a single run for each $p$ under $N=100$, $N=1000$ and $N=10000$ (from upper to bottom).}  
\label{fig:is}
\end{figure}

\subsubsection{Sequential Monte Carlo Method}\label{sim:smc}

We now consider the SMC method to deal with the weight degeneracy. Similarly to the previous subsection, we set $\pi=1$, $q=0.33$, $r=0$, and let $p\in\{0.05, 0.15, \ldots, 0.85\} $. In order to keep consistency and potentially maximize the convergence rate, we also set the driving value $\theta_0$ equal to $(1,0.66,0.33,0)$. After some experiments, we find that the stratified resampling scheme (see e.g.~Doucet \& Johansen (2011)) outperforms other resampling schemes which is why it is adopted. We consider two resampling schemes of both dynamically resampling and resampling at each time, we run SMC 30 times with $N\in\{1000,10000\}$ and also construct a confidence interval via the repeats. 
The results are in Figures \ref{fig:smceverytime} and \ref{fig:smcdynamic}.

From plots (a) and (b) in Figures \ref{fig:smceverytime} and \ref{fig:smcdynamic} the performance of the estimation of the likelihood curve improves
as $N$ grows, as one would expect. It is also clear that dynamically resampling is more
accurate; again this is expected as one does not resample `too often' removing promising particles and exacerbating the path degeneracy problem.
To investigate the dynamically resampling SMC algorithm, we consider the ESS and the number
of unique particles, for a typical run of the algorithm, with $N\in\{100,1000,1000\}$ in Figure 
\ref{fig:smcdynamic}. 


For plot (c) of Figure \ref{fig:smcdynamic}, firstly, for the results of $UN$, with $N\in\{100, 1000, 10000\}$, the value of UN tends to approach the sample size at the last time. This means that SMC method works very well in this example, there is little path degeneracy, so that the results of the SMC are reliable. Note at the first few times, the value of UN is quite low since we have to choose 100, 1000, or 10000 at time one from 10, or 9, or an even smaller number of nodes. Secondly, for the results of ESS, almost at every time, the ESS is above the resampling threshold of $N/2$. This shows that the stratified resampling (dynamically) SMC can help to deal with the variance of the weights, which was evident in the IS example. This also means that this approach 
can help to deal with the relative variance issue encountered by IS,  and this will be verified in section \ref{sim:rv}. The results shown in this figure are quite satisfactory, as one would hope for in such a simple example.


\begin{figure}[!h]
\centering
\subfigure[$N=1000$, Resampling each time.]
{{\includegraphics[width=6.5cm,height=4cm]{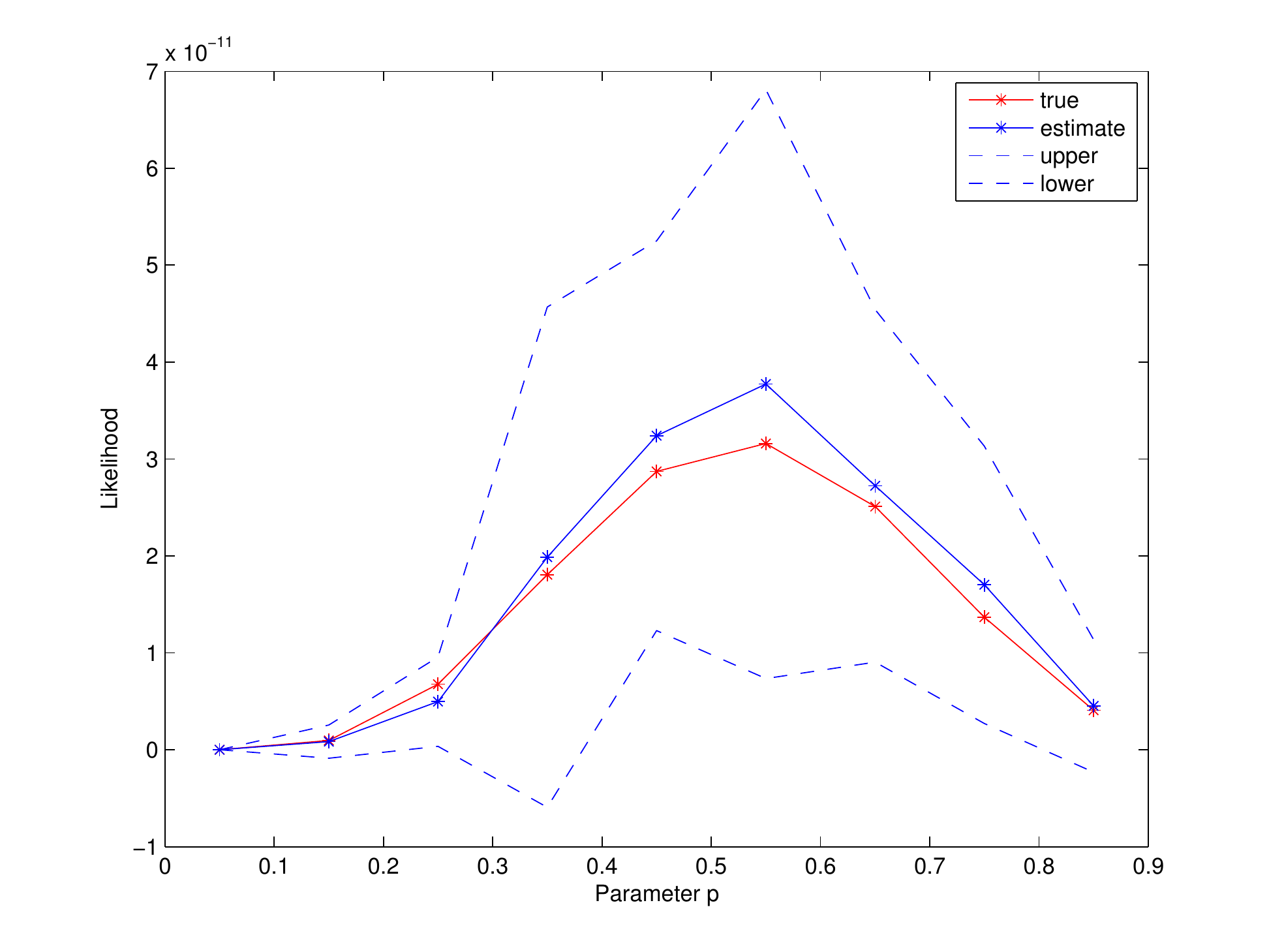}}}
\subfigure[$N=10000$, Resampling each time.]
{{\includegraphics[width=6.5cm,height=4cm]{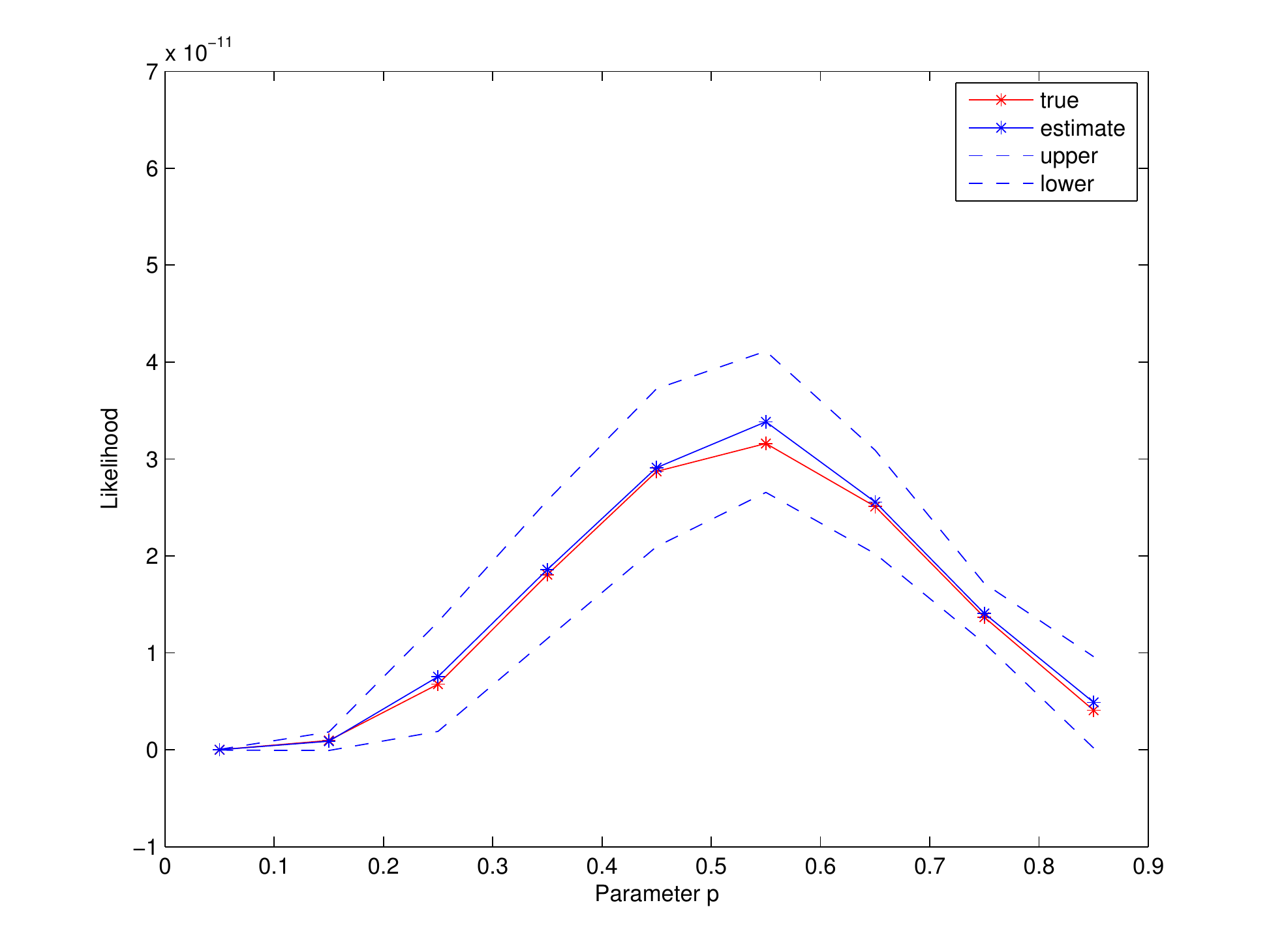}}}
\subfigure[$N=1000$]
{\includegraphics[width=6.5cm,height=4cm]{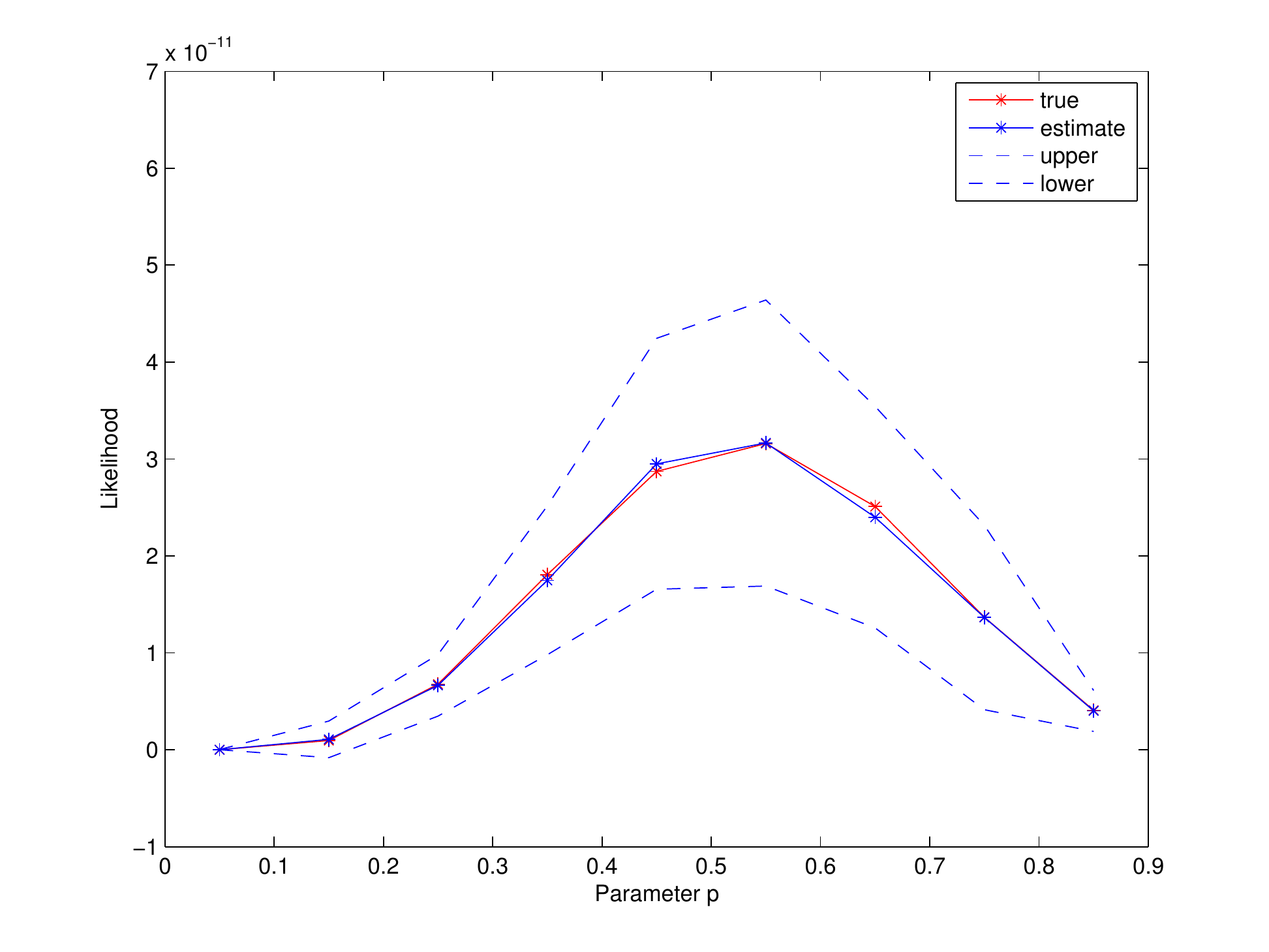}}
\subfigure[$N=10000$]
{\includegraphics[width=6.5cm,height=4cm]{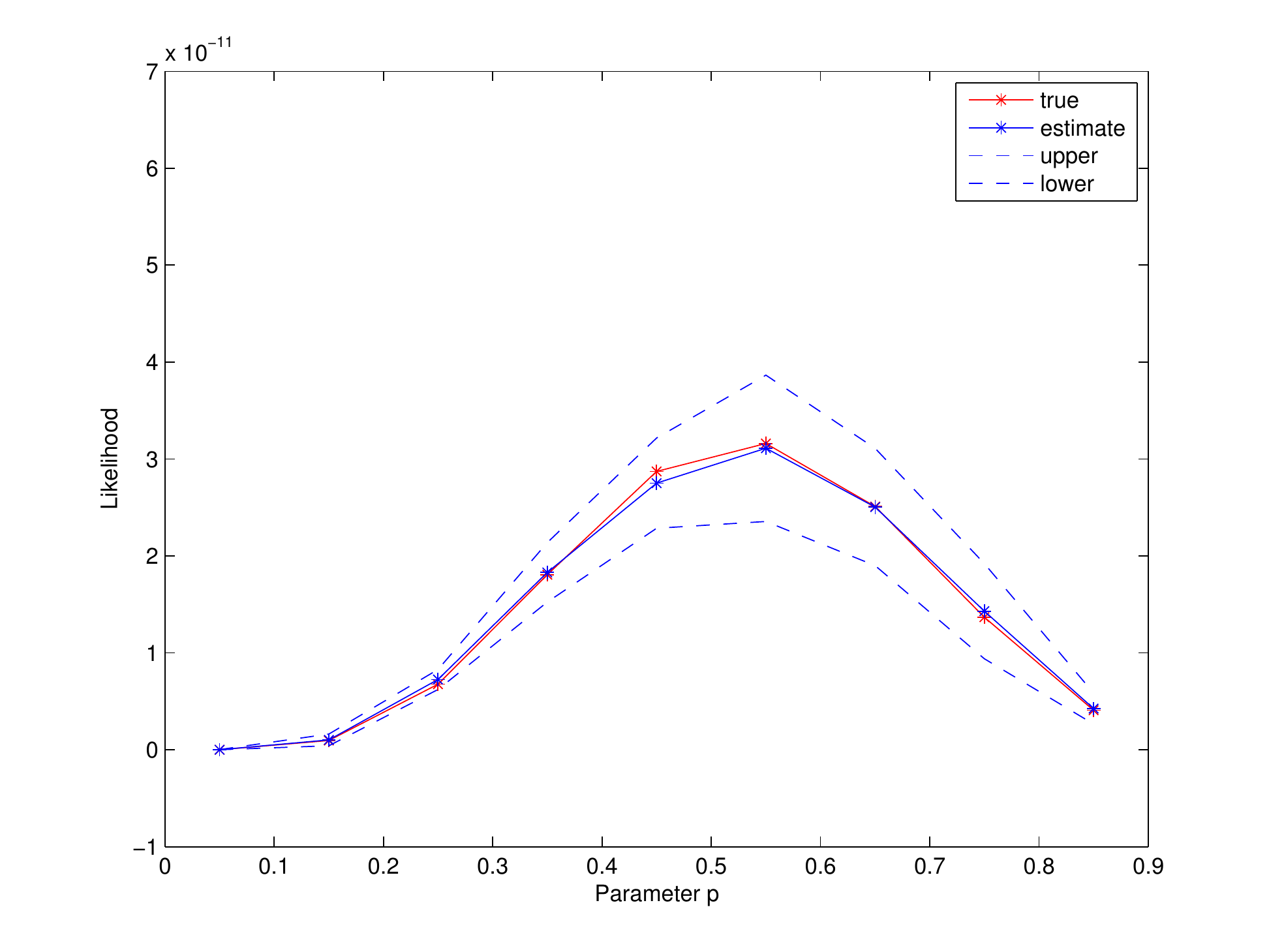}}
\caption{ \small Simulation results of the stratified resampling at every time SMC algorithm and dynamically resampling algorithm (bottom row). The estimates are the average across 30 runs
 with the upper and lower lines the $\pm$ 2 standard deviations, across the runs.}  
\label{fig:smceverytime}
\end{figure}

\begin{figure}[!tpb]
\centering
\subfigure[$ESS$ and $UN$.]
{\includegraphics[width=15.5cm,height=6cm]{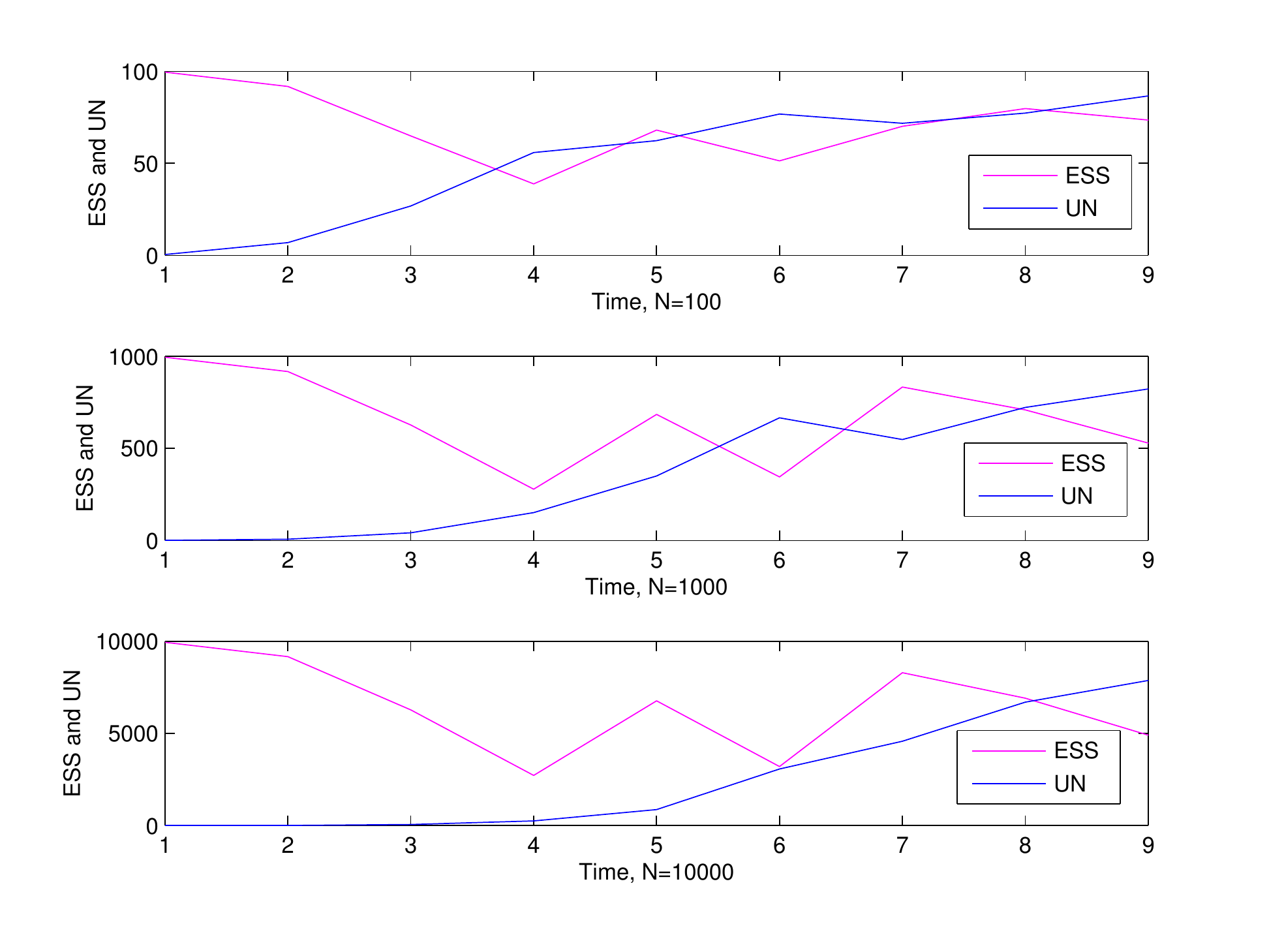}}
\caption{ \small Simulation results of the stratified resampling dynamically SMC algorithm: plots of the average of 30 runs of $ESS$ and $UN$ at every time with $\theta=(1,0.55,0.33,0)$ and $\theta_0=(1,0.66,0.33,0)$, under $N=100$, $N=1000$ and $N=10000$ (from upper to bottom).}  
\label{fig:smcdynamic}
\end{figure}

%

\subsubsection{ The Discrete Particle Filter Method}\label{sim:dpf}
We begin by noting that the DPF method does not need a proposal density, whilst in the previous IS and SMC sections, the proposal density is needed and represented by the driving value. For a fair comparison, in the latter section \ref{sim:cpu}, we will give simulation results of IS and SMC with dynamical stratified resampling when $\theta=\theta_0$.
 In this subsection, we only need to set $\pi=1$, $q=0.33$, $r=0$, and $p\in\{0.05, 0.15, \ldots, 0.85\} $. Then we run the DPF algorithm 30 times with $N\in\{100,1000,10000\}$ respectively to obtain the estimated likelihood curve and the confidence interval. Results are shown in Figure \ref{fig:dpf}.

From the three plots in Figure \ref{fig:dpf} as $N$ grows larger, the estimated likelihood curve tends to be closer to the true likelihood and the confidence interval narrower (versus the other methods). While the accuracy of the estimate is quite satisfactory, the estimated likelihood curve when $N=1000$ is almost the same as the true likelihood curve, and the confidence interval is very narrow. Note that the exact calculation is $\mathcal{O}((t-t_0)2^{t-t_0})$ (which is of order of about 4600) so one has almost the exact calculation when $N=10000$.
This is better than the previously mentioned methods, but we are yet to account for the computational time element.  

\begin{figure}[!tpb]
\centering
\subfigure[$N=100$]
{\includegraphics[width=15.5cm,height=4cm]{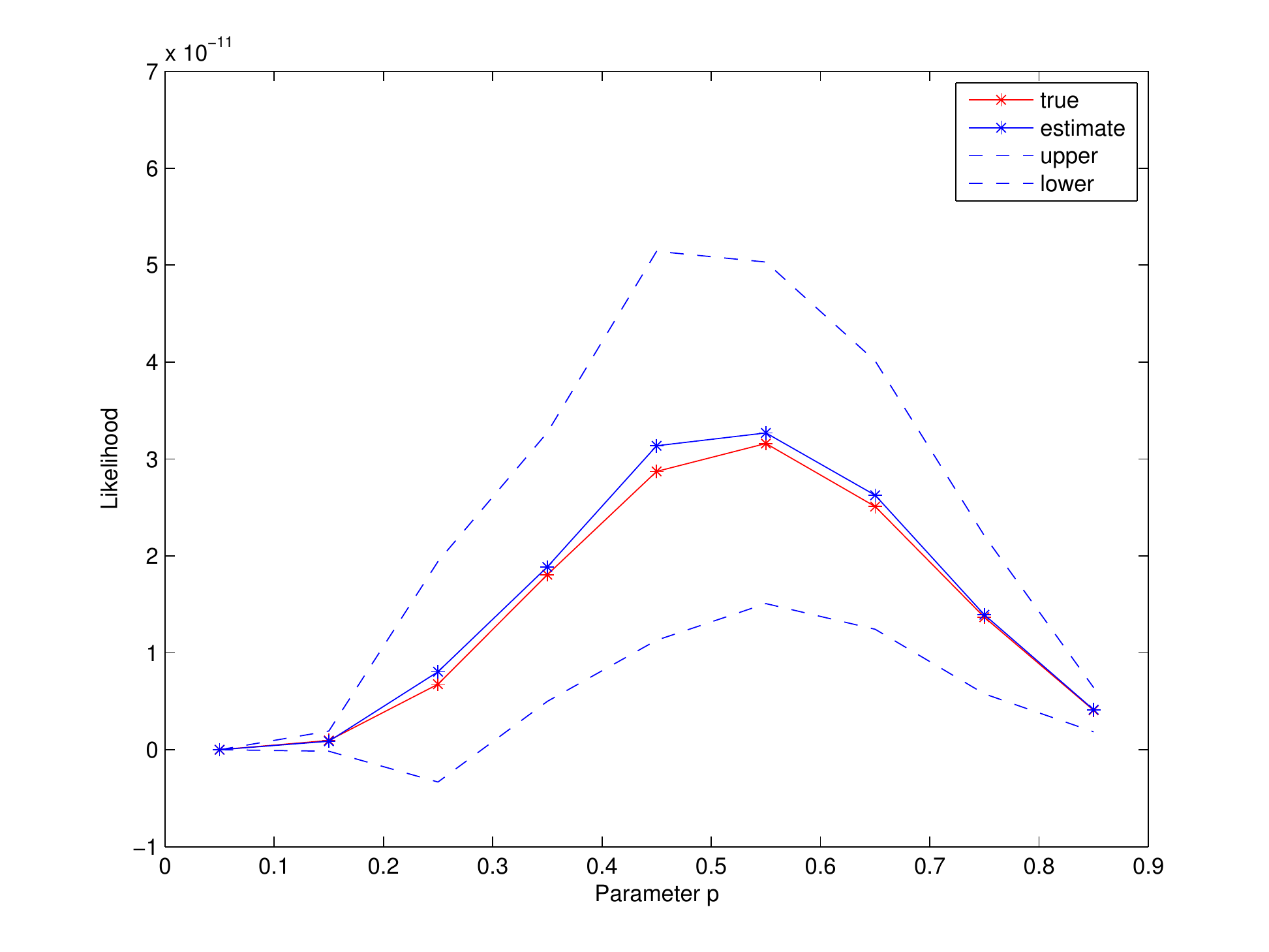}}
\subfigure[$N=1000$]
{\includegraphics[width=15.5cm,height=4cm]{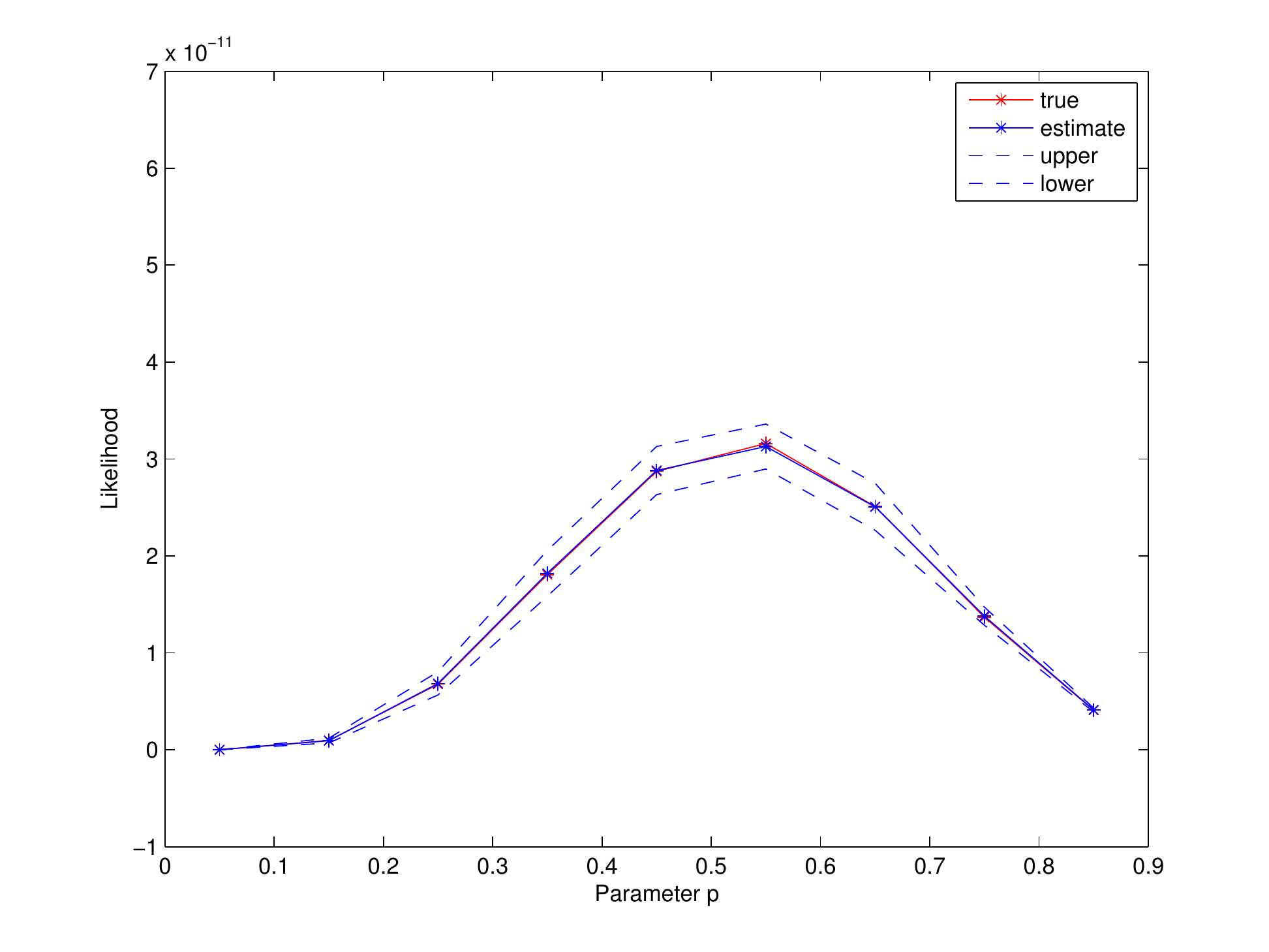}}
\subfigure[$N=10000$]
{\includegraphics[width=15.5cm,height=4cm]{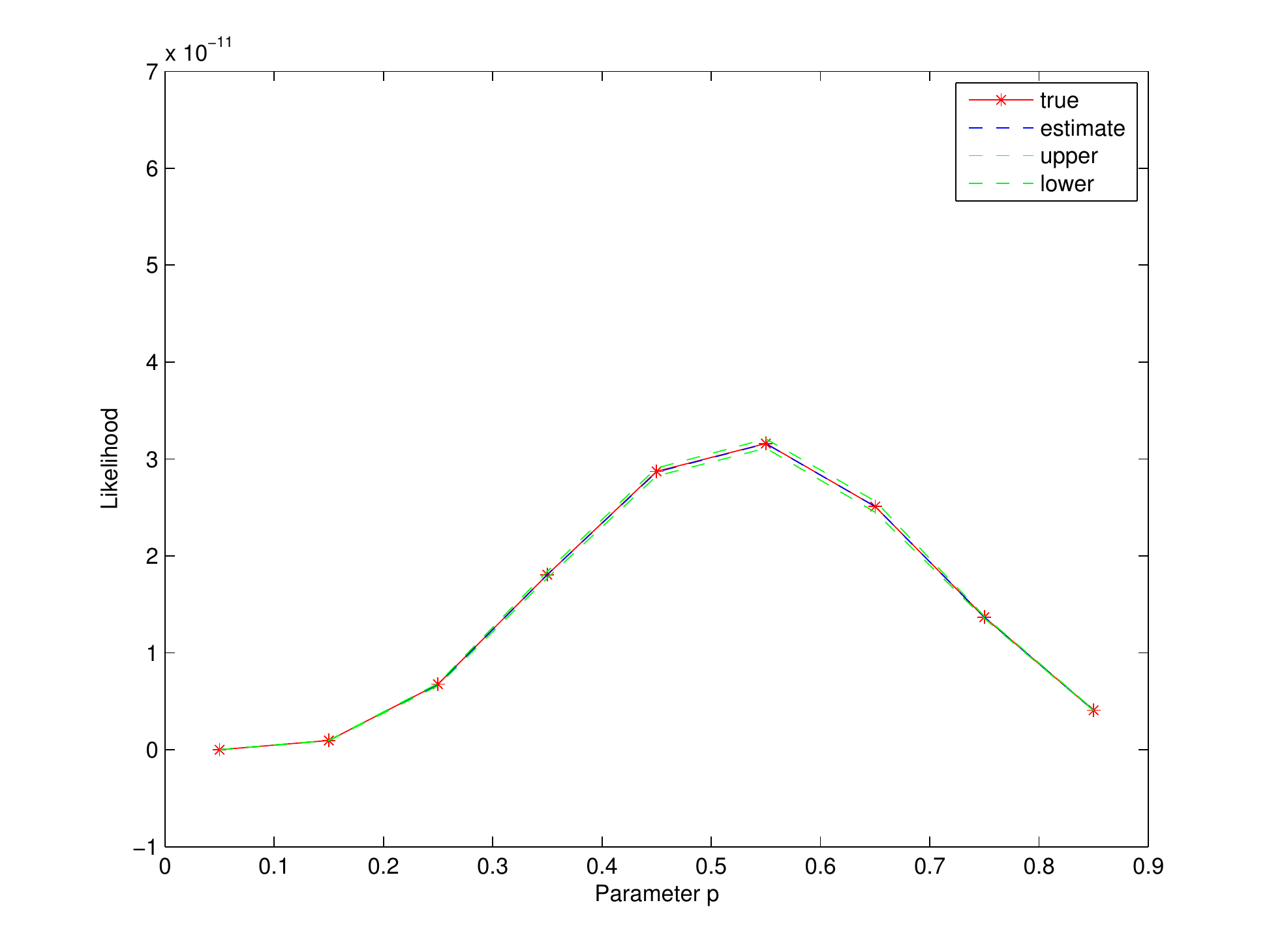}}
\caption{ \small Simulation results of DPF algorithm: figures (a)-(c) are plots of estimated likelihood curve of 30 runs under $N=100, 1000, 10000$ respectively, the red solid line with stars is the true likelihood the blue solid (or dashed) line with stars is the mean of 30 estimates, and the other two blue (or green) dashed lines are $\bar{x}-2s$ and $\bar{x}+2s$ respectively.}
\label{fig:dpf}
\end{figure}

\subsubsection{Relative Variance}\label{sim:rv}
In this subsection, we consider the relative variances
of the likelihood estimate, for each of the methods implemented above.
We focus on comparing relative variances as the size of the network changes. We use the DA model to generate different sizes of network models, from size 5 up to size 13. Then for each network model, we use the Monte Carlo methods to estimate the relative variance using $30$ repeats and 1000 particles with $\theta=(1,0.55,0.33,0)$, $\theta_0=(1,0.66,0.33,0)$ to obtain estimators. We remark that we can also obtain the true likelihood. The results are displayed in Table \ref{table:rv} and the conclusions seem to be consistent across different parameter values.

 From this table, basically, for all of these three methods, as the size of network model grows, the relative variance between the estimated likelihood and the true likelihood goes larger;  this is unsurprising as this is to be expected. 
Amongst these three methods, the DPF method has the smallest value of the relative variance and is significantly more so.
This tells us that the DPF algorithm provides us more accurate and reproducible estimators, at least for this example. As for the other two methods, for the size of network model below 10, there are tiny differences, for the size of network model beyond 10,  the stratified dynamically resampling SMC gives better results than the IS method. This latter result is consistent with the remarks in Section \ref{sec:is} and Section \ref{sec:smc} and also the results in Section \ref{sim:is} and Section \ref{sim:smc}. 

\begin{table}{}
\begin{center}
\begin{tabular}  {c|c|c|c}
SIZE & IS  & STRA &  DPF\\ 
\hline
5 &     0.0003 &  0.0002 & 0.0000 \\
6 &     0.0027  & 0.0030  &0.0000 \\
7 &   0.0043 & 0.0064  &0.0000 \\
8 & 0.0158 &  0.0142   &0.0000 \\
9 &  0.0149& 0.0136  &0.0010 \\
10 &  0.0419& 0.0128  & 0.0036 \\
11 & 0.1512& 0.0364  &0.0084 \\
12 &   0.5659& 0.1115  &0.0079 \\
13&  1.4224&0.3022   &0.0657 \\
\end{tabular}
\end{center}
\caption{\small Relative variance of the estimates of the above three methods w.r.t the true likelihood: here the results refer to the size of network from 5 up to 13, with $\theta=(1,0.55,0.33,0)$, $\theta_0=(1,0.66,0.33,0)$ and $N=1000$.} \label{table:rv}

\end{table}

\subsubsection{CPU Time}\label{sim:cpu}
Here we are mainly interested in comparing results of each of the methods obtained in approximately the same (wall-clock) computation time. As we mentioned in the previous subsection that the DPF method does not use a driving value so in order to obtain fair results 
we set $\theta=\theta_0$ for the IS and SMC approaches.

After a few experiments, we find that when we set $N=1000$ for both IS and DPF algorithms, and set $N=550$ for the SMC algorithm, then run 30 times for all of them, the run times are 1274.67 seconds (IS), 1289.31 seconds(SMC) and 1272.13 seconds (DPF). The estimated likelihood curves and confidence intervals are displayed in Figure \ref{fig:ct}.
 
In Figure \ref{fig:ct}, with regards to the estimated likelihood curves, the estimated likelihood curve obtained by the DPF method is closest to the true likelihood curve, and the estimated likelihood curve under the SMC method seems to a little closer to the true likelihood curve than the estimated likelihood curve under the IS method. The confidence interval under the DPF method is the narrowest,  and the confidence interval under the SMC method is also a little narrower than the confidence interval for the IS method. Thus, we can conclude that for the given computation time and this specific example, the DPF method can provide us the best estimator, and the SMC method is the second, then the IS method is the last.

\begin{figure}[!tpb]
\centering
\centerline{\includegraphics[width=15.5cm,height=6cm]{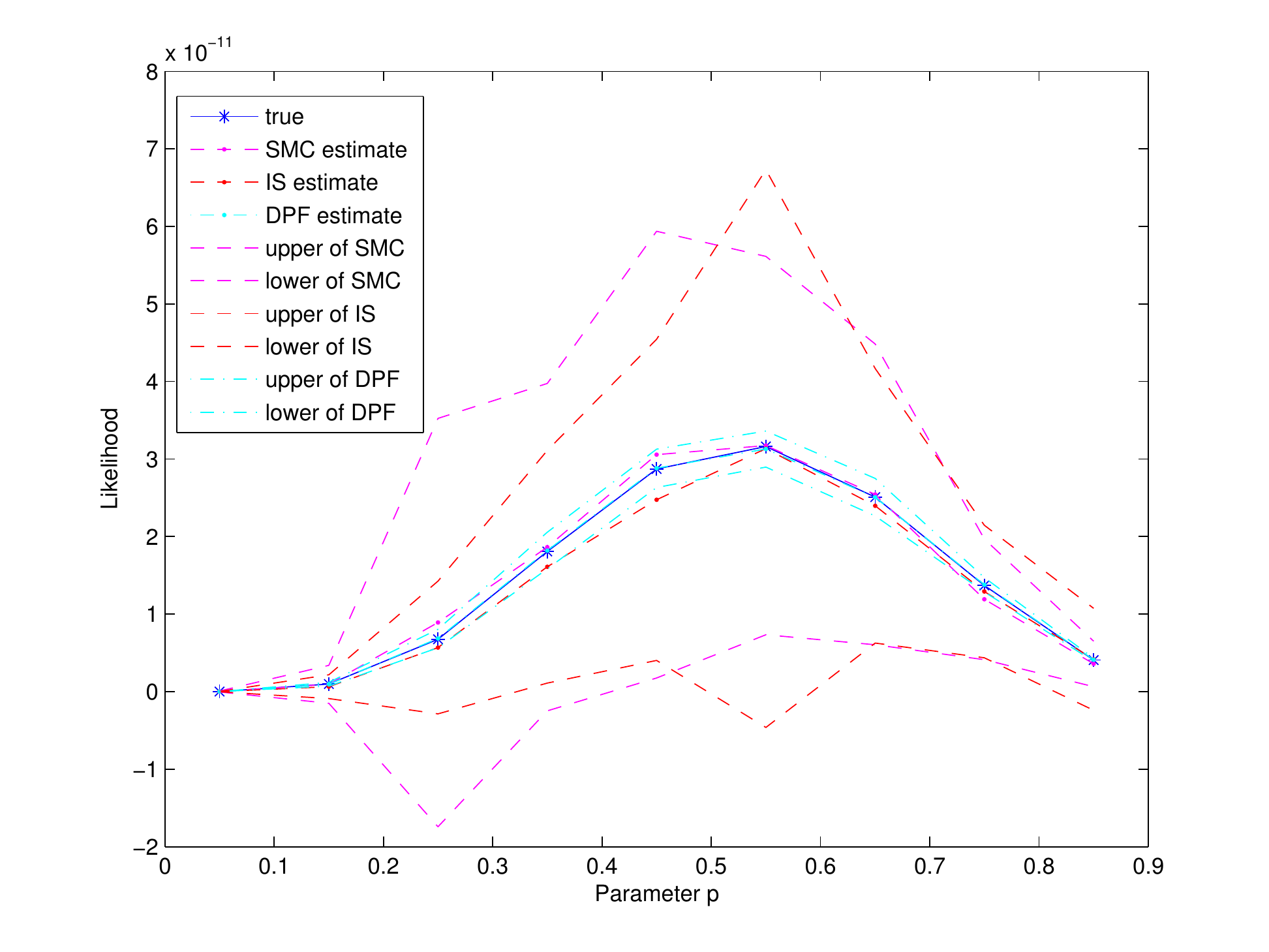}}
\caption{ \small Plot of CPU time comparison: these are results of 30 runs under IS, SMC and DPF algorithms. $N=1000$ for IS and DPF algorithms; $N=550$ for SMC algorithm. The blue solid line with stars is the true likelihoood, the purple, red and light blue dashed lines with dots are the mean of 30 SMC, IS and DPF estimates respectively, and the other two purple dashed lines, two red dashed lines and two light blue dashed lines are $\bar{x}-2s$ and $\bar{x}+2s$ of SMC, IS and DPF estimates respectively.} \label{fig:ct}
\end{figure}

\subsubsection{Simulation Results for PMCMC}\label{sim:pmcmc}

We will now test the two PMCMC algorithms (i.e.~with SMC and the DPF - recall the SMC will use dynamic resampling), we will compare to samples drawn exactly (via rejection sampling) from the posterior density with those samples generated by the PMCMC algorithms. 
The latter can be achieved when the observed network is small and we generate such a data-set. 
It is also possible to run the idealized MCMC algorithm that just samples $\theta$, which
is the best that either PMCMC algorithm can do; we also consider this in our simulations.
We generate a network model with 8 nodes and $p=0.66$.  We set only $p$ unknown with
a uniform prior on $[0,1]$. We use a proposal that is a random walk on the logit scale.
The results can be found in Figures \ref{fig:pmcmc1}-\ref{fig:pmcmc2}.

In Figures \ref{fig:pmcmc1}-\ref{fig:pmcmc2}, we can observe that the two PMCMC algorithms
perform similiarly to the marginal MCMC. In addition, they produce solutions consistent with i.i.d.~sampling. This is example is quite simple, but illustrates that such methodology can be useful for network models. A more challenging example can be found in the next Section.

\begin{figure}[!tpb]
\centering
\subfigure[Marginal MCMC]
{\includegraphics[width=6.5cm,height=4cm]{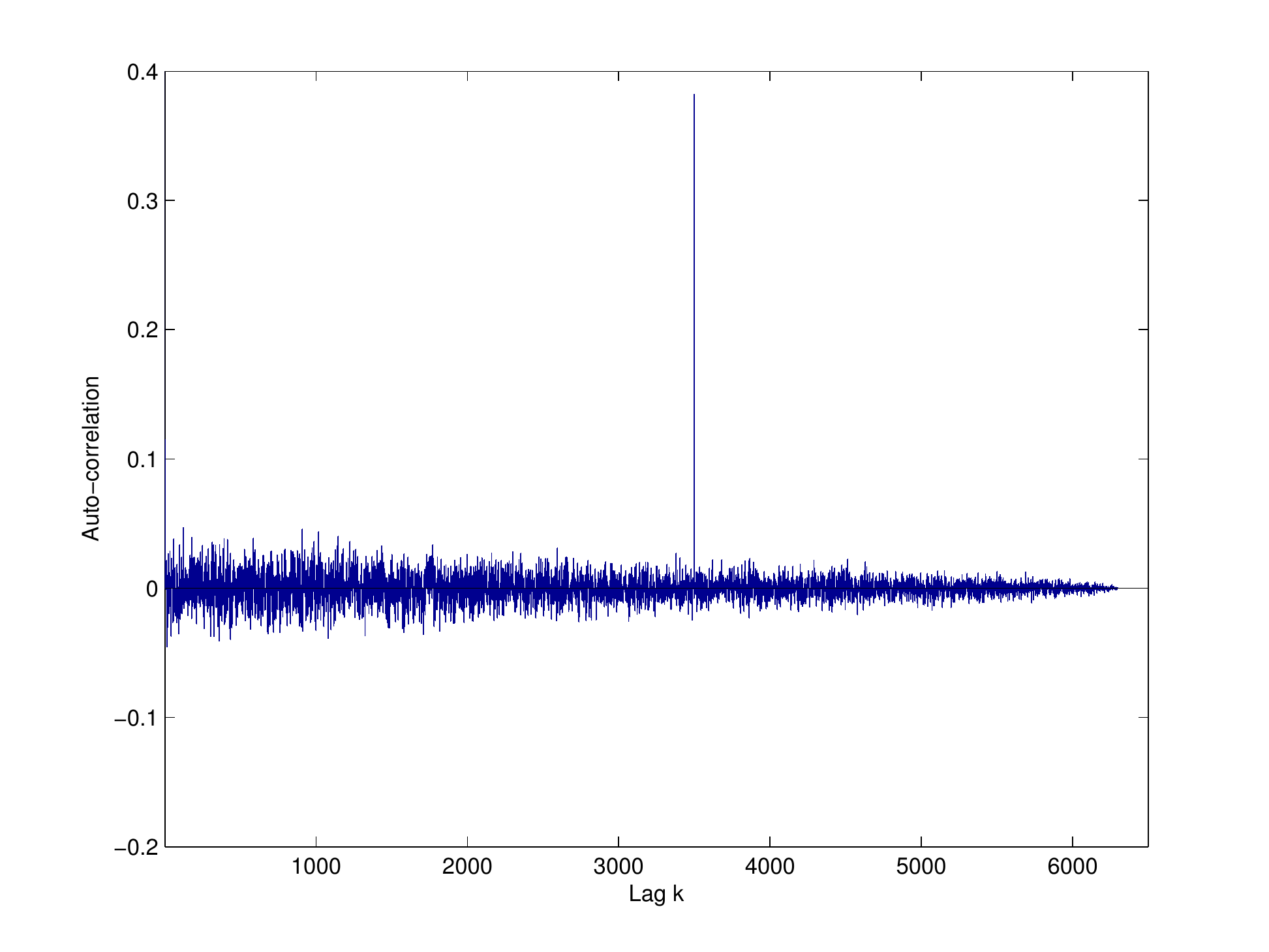}}
\subfigure[PMCMC with SMC]
{\includegraphics[width=6.5cm,height=4cm]{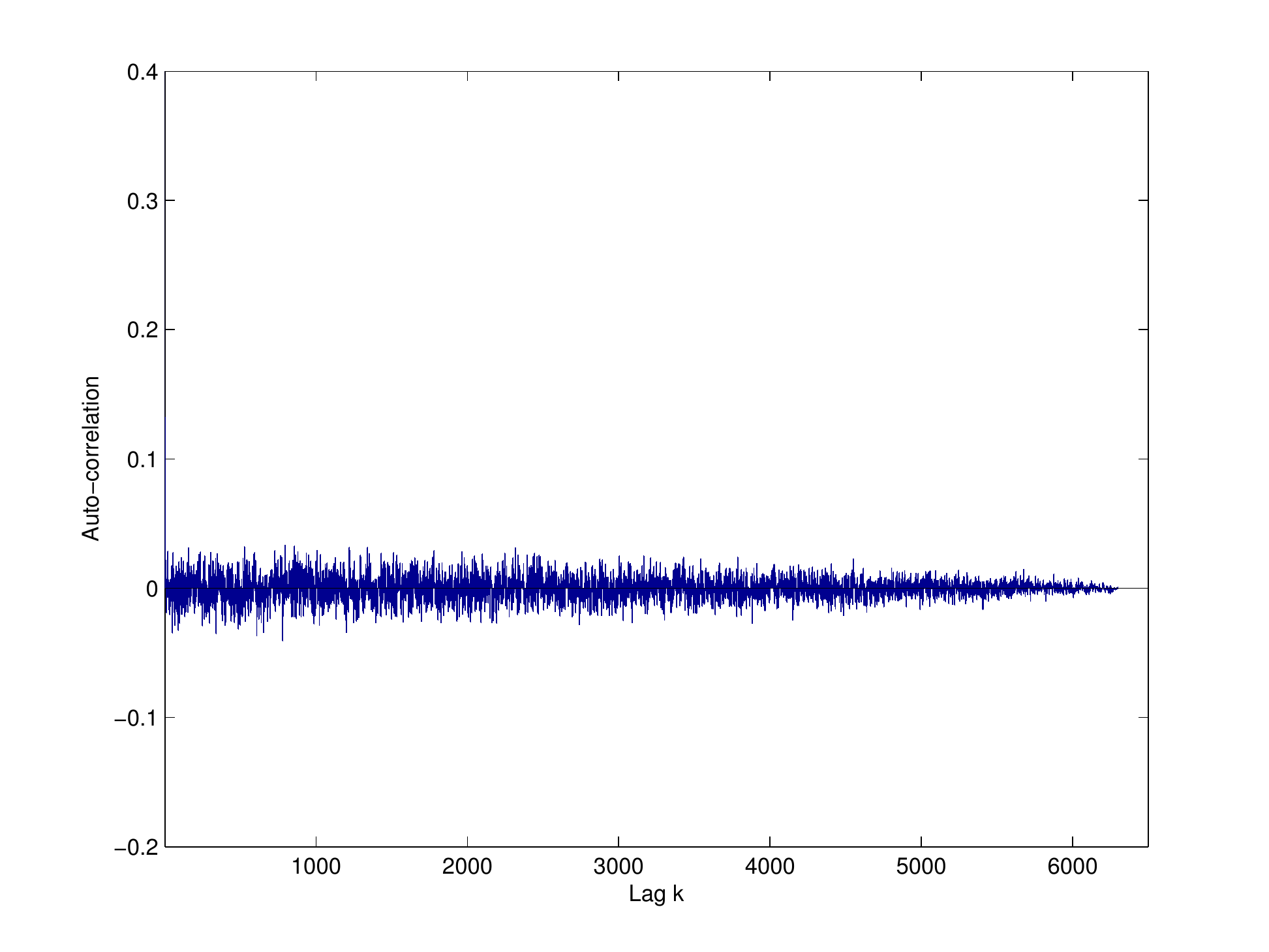}}
\subfigure[PMCMC with DPF]
{\includegraphics[width=6.5cm,height=4cm]{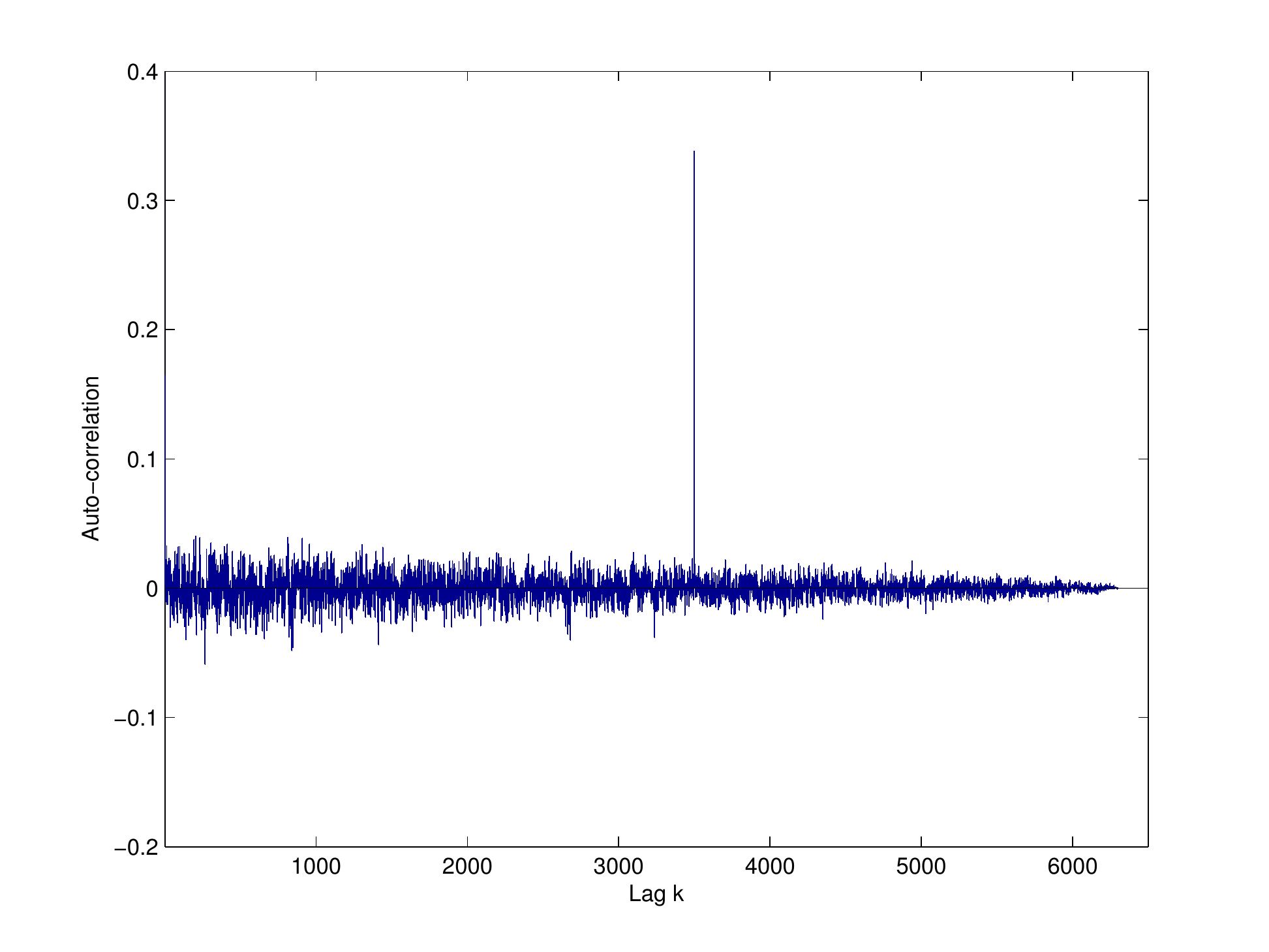}}
\caption{ \small Auto-correlation  Plots for three MCMC Algorithms (small network).}
\label{fig:pmcmc1}
\end{figure}

\begin{figure}[!tpb]
\centering
\subfigure[IID Sampling]
{\includegraphics[width=6.5cm,height=4cm]{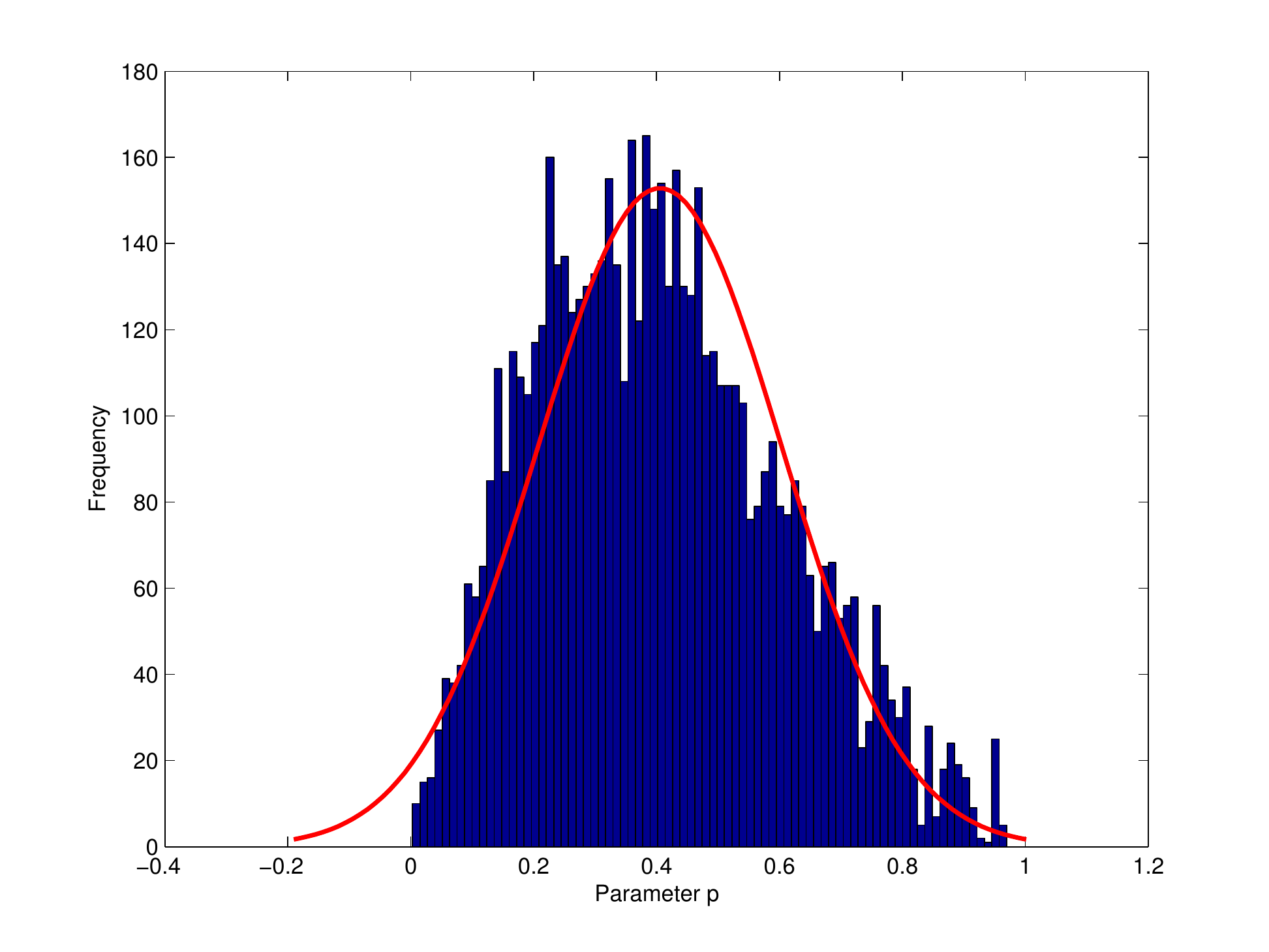}}
\subfigure[Marginal MCMC]
{\includegraphics[width=6.5cm,height=4cm]{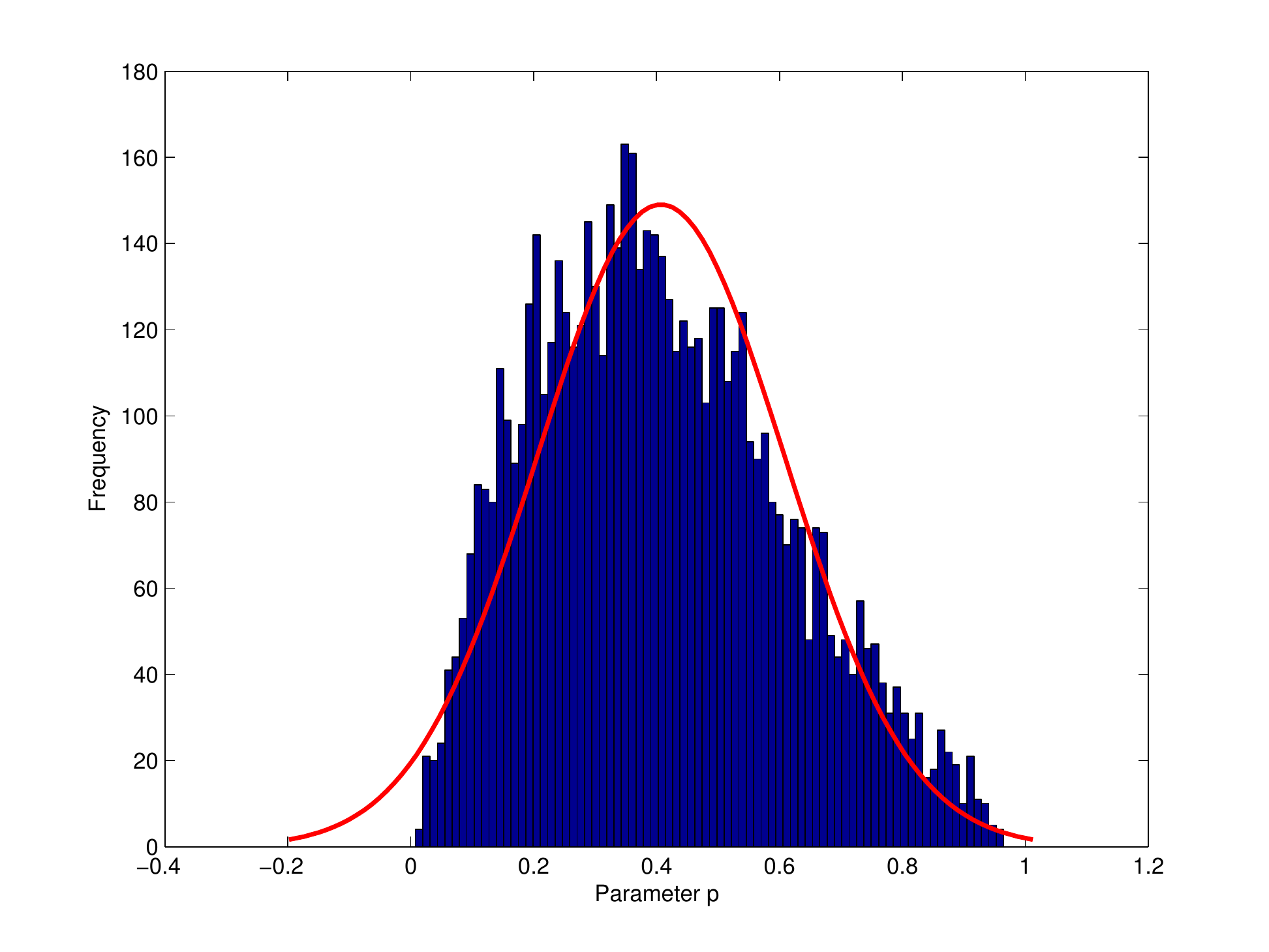}}
\subfigure[PMCMC with SMC]
{\includegraphics[width=6.5cm,height=4cm]{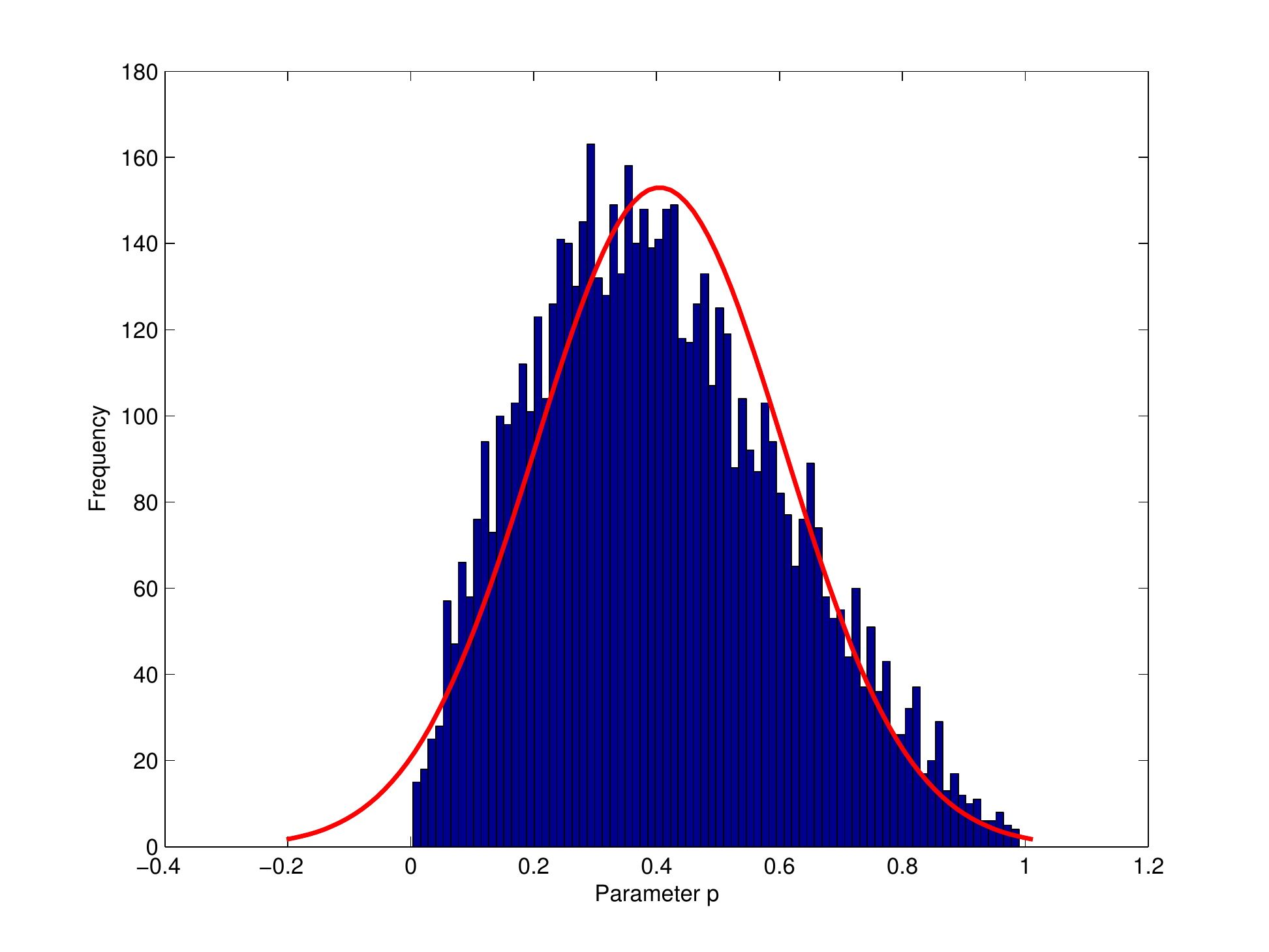}}
\subfigure[PMCMC with DPF]
{\includegraphics[width=6.5cm,height=4cm]{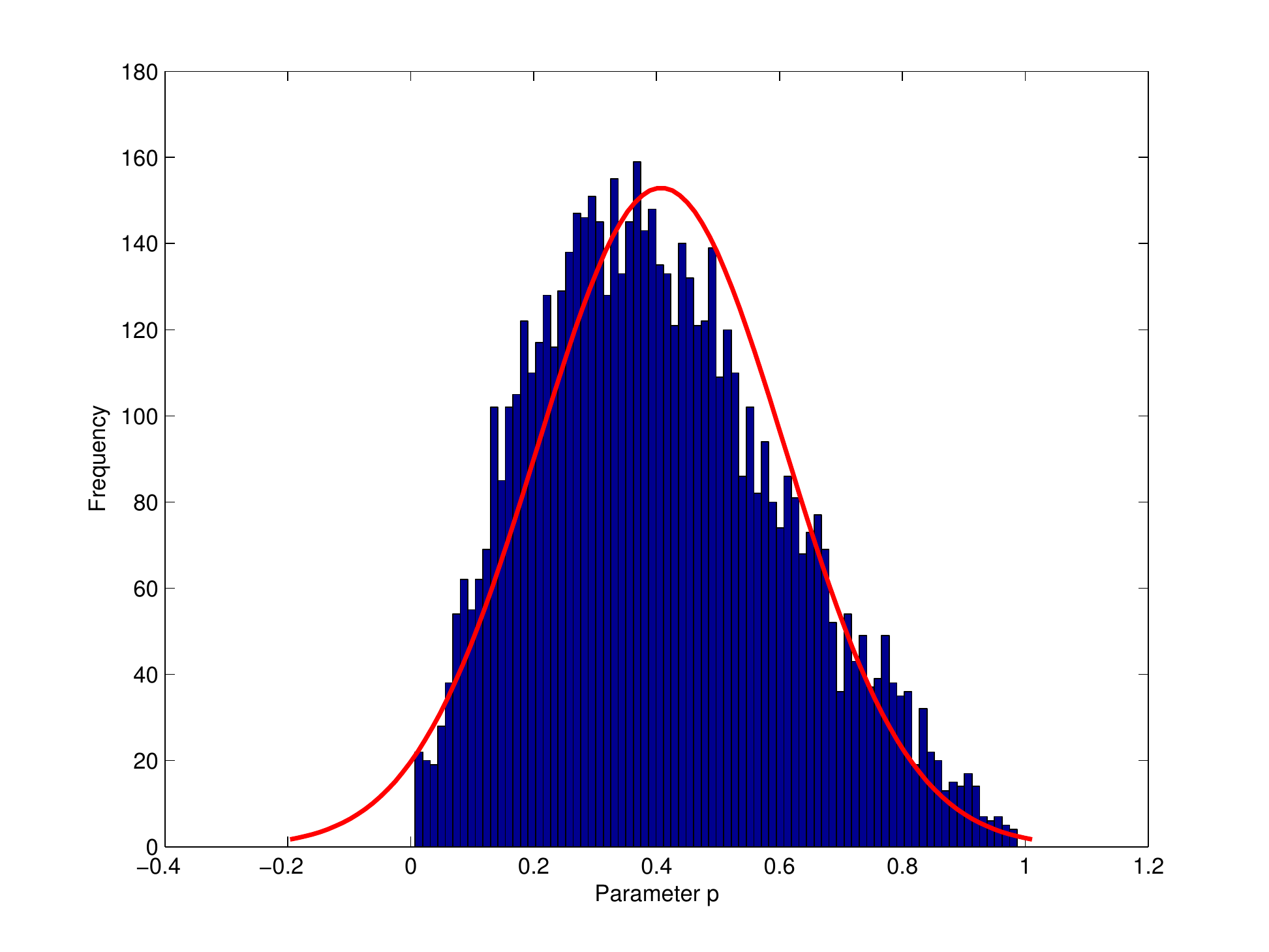}}
\caption{ \small Density Plots for IID sampling and three MCMC Algorithms.}
\label{fig:pmcmc2}
\end{figure}

\subsection{A Larger Network}

Here we consider the application of the above methods for larger sized data. We apply these methods to a graph with 100 nodes which is generated by the DA model with parameter $\theta=(1,0.66,0.33,0)$. 99 nodes are removable.

\subsubsection{Results for the Likelihood Approximation}

We now apply the three methods to approximate the likelihood of the graph. We set $\theta_0=\theta=(1,0.66,0.33,0)$ in the IS and SMC algorithms. 

Firstly, the results of the IS method are displayed in Figure \ref{fig:is1001000}; we consider $N\in\{100,1000\}$. From Figure \ref{fig:is1001000}, we see that for some parameters $p$, the corresponding log-likelihood 
is not properly estimated as the importance weights are too variable and the values of the log-weights become very small. This is supported by examining the ESS values, which for either value
of $N$ never exceeds 7 and is often 1 (the results are not displayed). The overall wall-clock computation time was 3578.41 and 40374.96 seconds for $N=100$ and $N=1000$.

Secondly, Figures \ref{fig:smc100010}-\ref{fig:smcess} display results of the SMC method (dynamic resampling), we consider again $N\in\{100,1000\}$. 
Figure \ref{fig:smc100010} shows that the variance issues of the IS method is dealt with (here $N=1000$, but similar results are obtained when $N=100$).
Figure \ref{fig:smcess} displays the values of ESS and UN. In general the algorithm performs reasonably well with regards to these criteria, with a small
issue (for this run) when there are around 25 removable nodes left. At this stage it appear that the path degeneracy effect is taking hold, but the algorithm does not collapse
and the results are reasonably reliable (recall that Proposition \ref{prop:rel_var} implies one would want $N$ to be close to $99^3$, so the performance seems quite good given this).
The overall computation time was 31932.43 and 347146.19 seconds when $N=100$ and $N=1000$. Whilst this is substantially larger than IS, the case when $N=100$ is a similar time
to IS when $N=1000$, but the former results are quite acceptable, whereas this is not the case for the latter.

Lastly, we also apply the DPF method; the results when $N=100$ is given in Figure \ref{fig:dpf10010}. The results are similar to the results of the SMC method, also quite satisfactory. 
The overall computation time was 414233.84 seconds, but in contrast to the SMC algorithm, all the particles before resampling are unique and so one would expect better results in general.
The computational time when $N=100$ is over 10 times that number when $N=100$ for the SMC.

\begin{figure}[!tpb]
\centering
\centerline{\includegraphics[width=15.5cm,height=6cm]{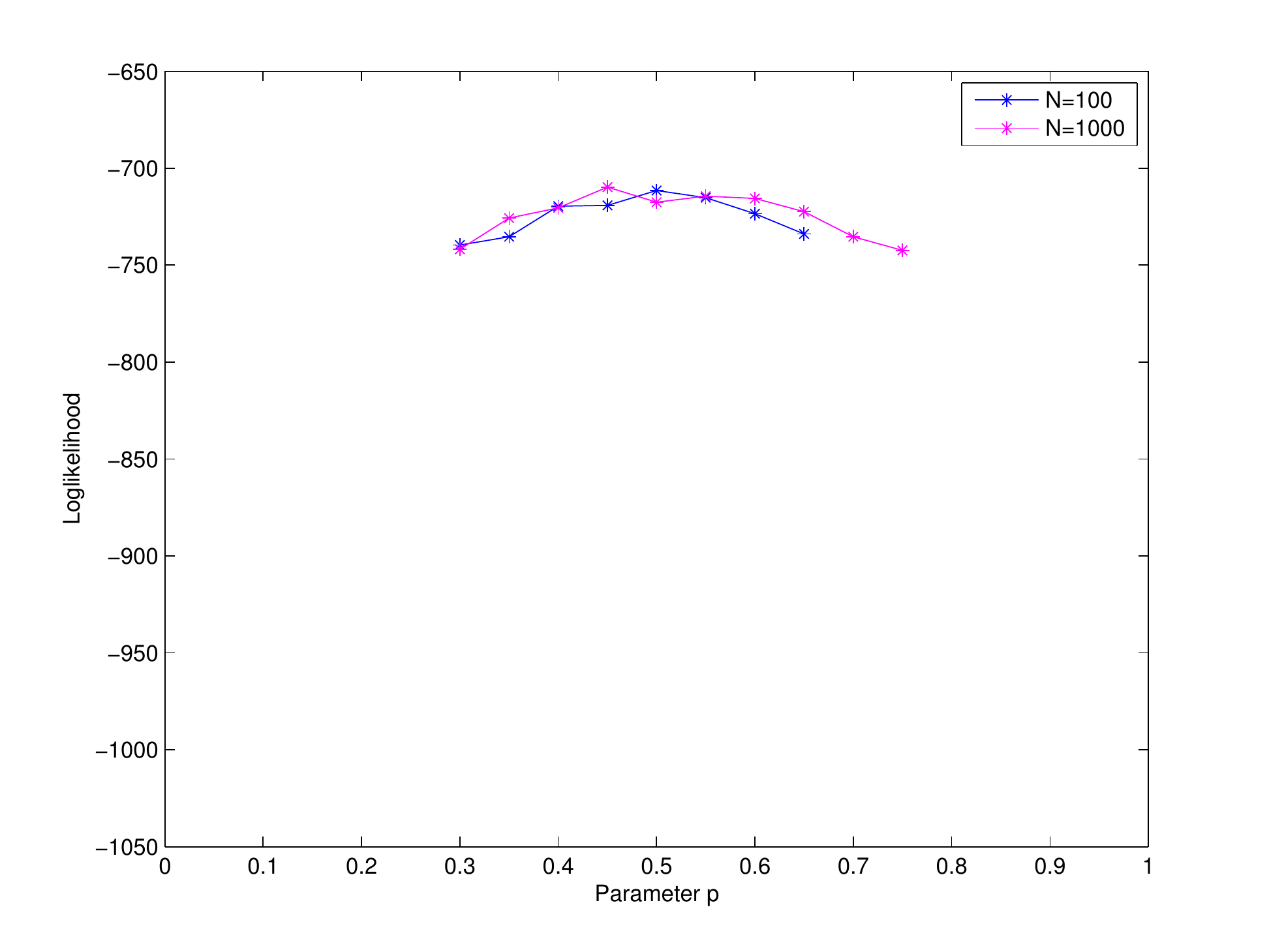}}
\caption{ \small Estimated loglikelihood curve of a single IS run under $N=100$ and $N=1000$. 
This is for the larger network.} \label{fig:is1001000}
\end{figure}

\begin{figure}[!tpb]
\centering
\centerline{\includegraphics[width=15.5cm,height=6cm]{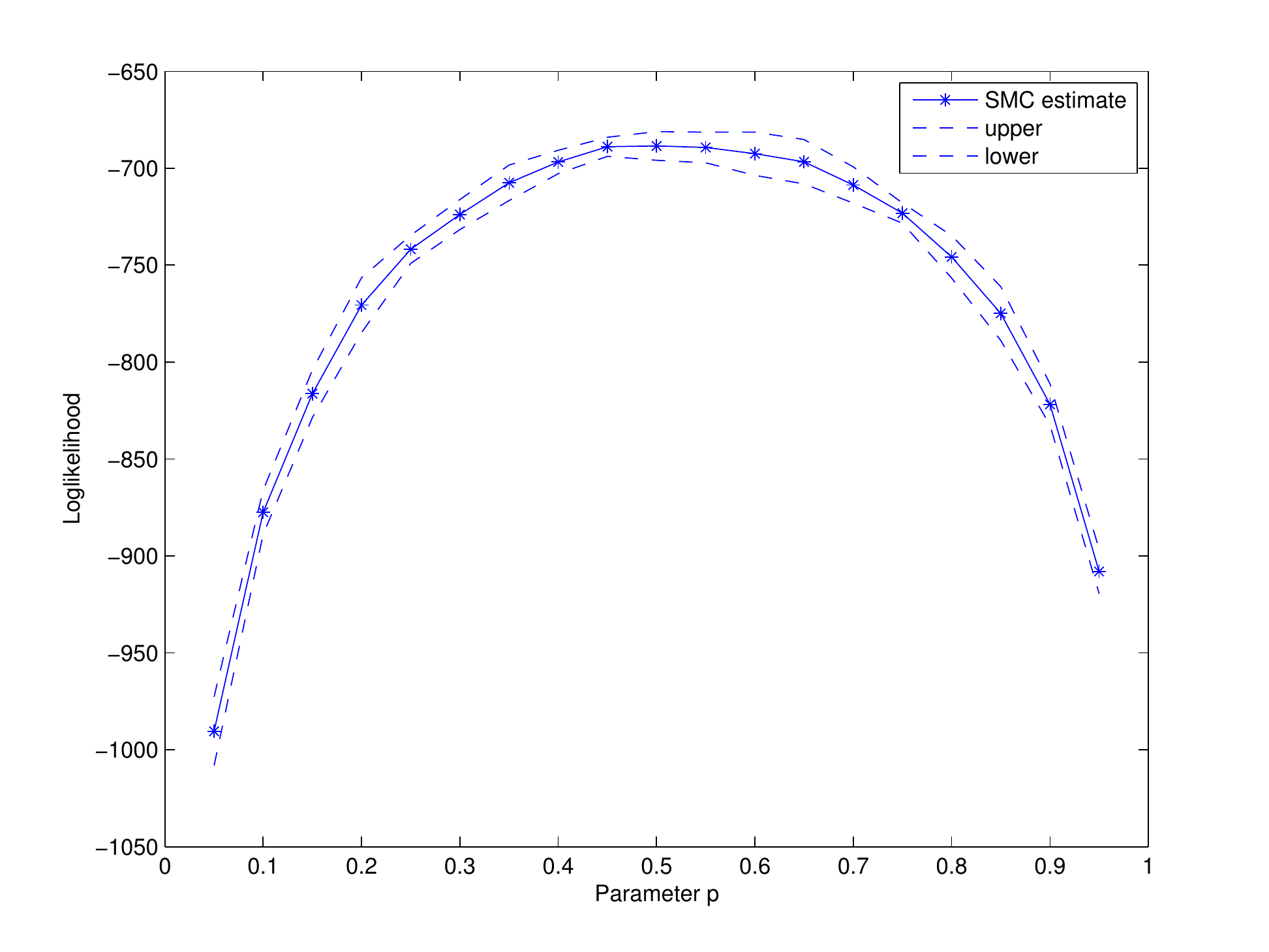}}
\caption{ \small Estimated loglikelihood curve of 50 SMC runs under $N=1000$. 
The dashed lines are $\bar{x}-2s$ and $\bar{x}+2s$ respectively (across the runs).
This is for the larger network.} \label{fig:smc100010}
\end{figure}

\begin{figure}[!tpb]
\centering
\centerline{\includegraphics[width=15.5cm,height=6cm]{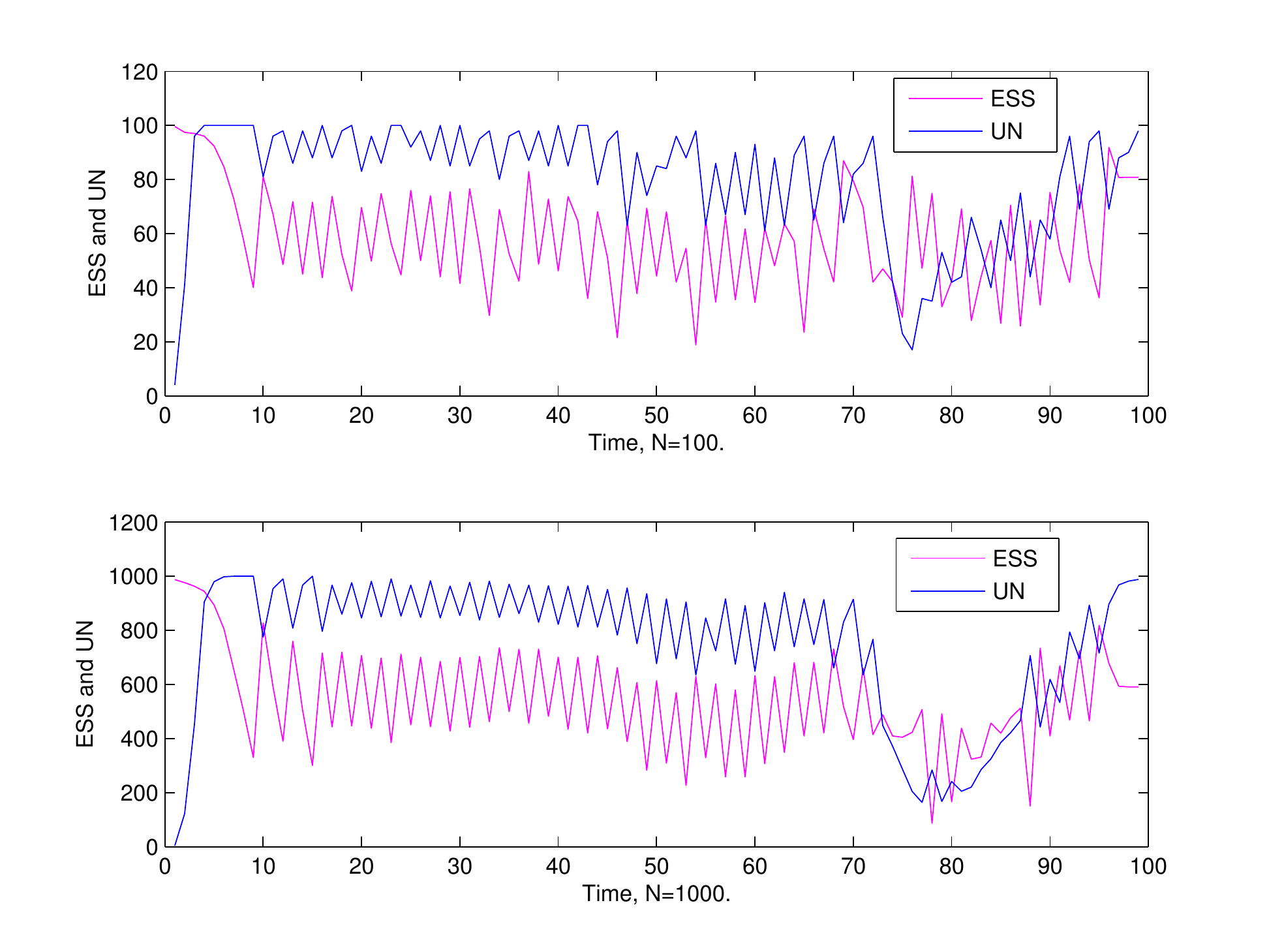}}
\caption{ \small Plots of ESS and UN for a single SMC run at every time, with $\theta=(1,0.55,0.33,0)$ and $\theta_0=(1,0.66,0.33,0)$. This is for the larger network.} \label{fig:smcess}
\end{figure}

\begin{figure}[!tpb]
\centering
\centerline{\includegraphics[width=15.5cm,height=6cm]{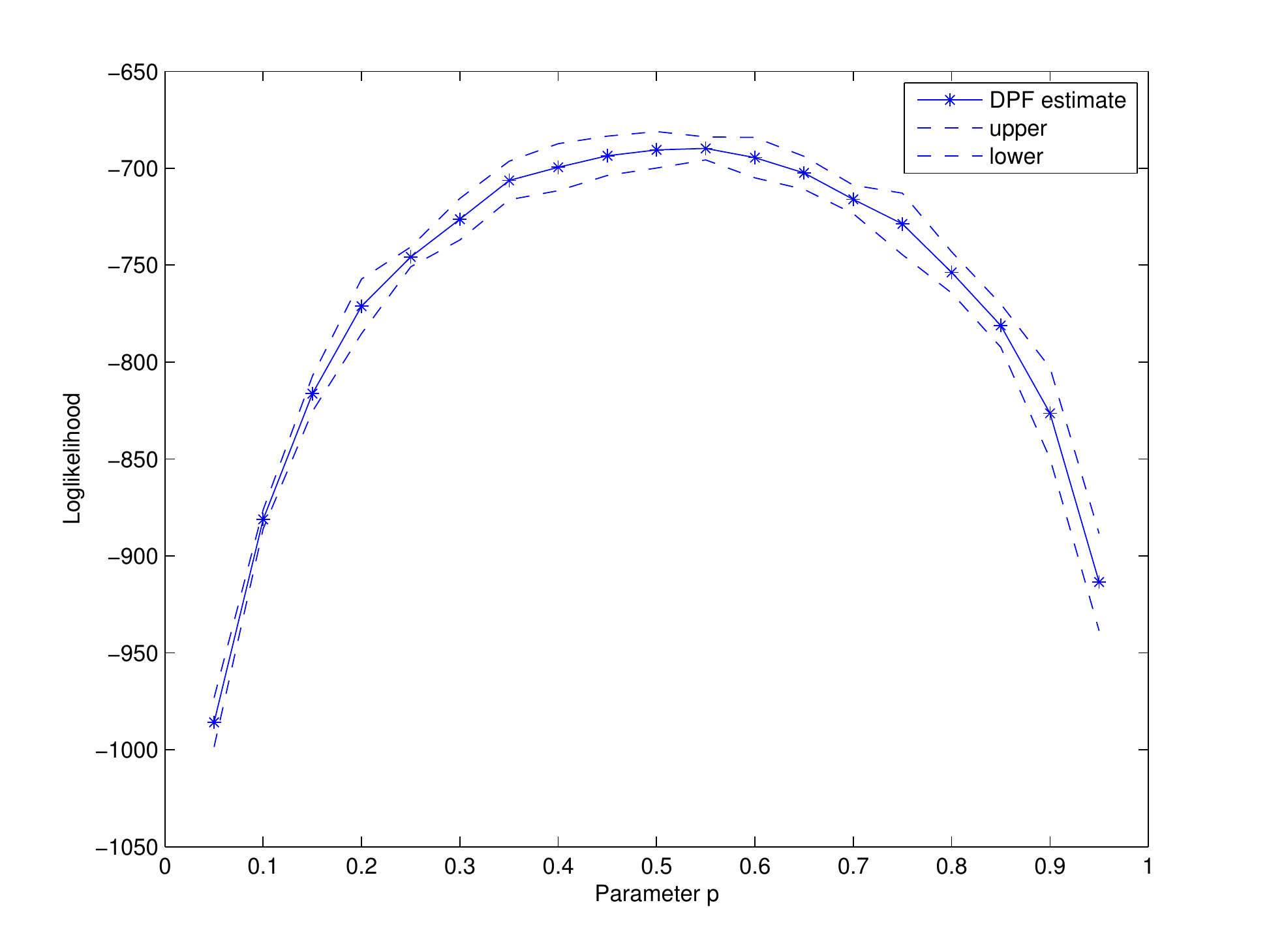}}
\caption{ \small Estimated loglikelihood curve of 50 DPF runs under $N=100$. 
The dashed lines are $\bar{x}-2s$ and $\bar{x}+2s$ respectively (across the runs).
This is for the larger network.} \label{fig:dpf10010}
\end{figure}

\subsubsection{Results for the PMCMC algorithm}

We now consider parameter inference for $p$, when it has a uniform prior on $[0,1]$.
We run two PMCMC algorithms, one which uses SMC and the other which uses a \emph{combination} of the DPF and SMC (as discussed in Section \ref{sec:dpf}). The reason for using the combination
approach is due to the computation time of applying the DPF in each iteration, which was relatively too long to just using SMC. We display results (for a typical run) for both procedures 
in Figure \ref{fig:smcanalysis}; we ran 5000 iterations with a 500 iteration burn-in and display every fifth sample. We use $N=100$ for the SMC with this value adjusted for the combination (of the DPF and SMC) to allow
roughly the same computation time, which was about 1 week.

In Figure \ref{fig:smcanalysis} we can see that the SMC + DPF combination appears to mix marginally better than the SMC when the computational time is close to the same. Given the previous
empirical results in this article, this is to be expected. The results appear to be quite reliable, in the sense that we have run multiple independent chains with similar out-puts. However,
we must take into account that the parameter space is very low-dimensional and the computation time has been quite long.

\begin{figure}[!tpb]
\centering
\subfigure[SMC]
{\includegraphics[width=6.5cm,height=6cm]{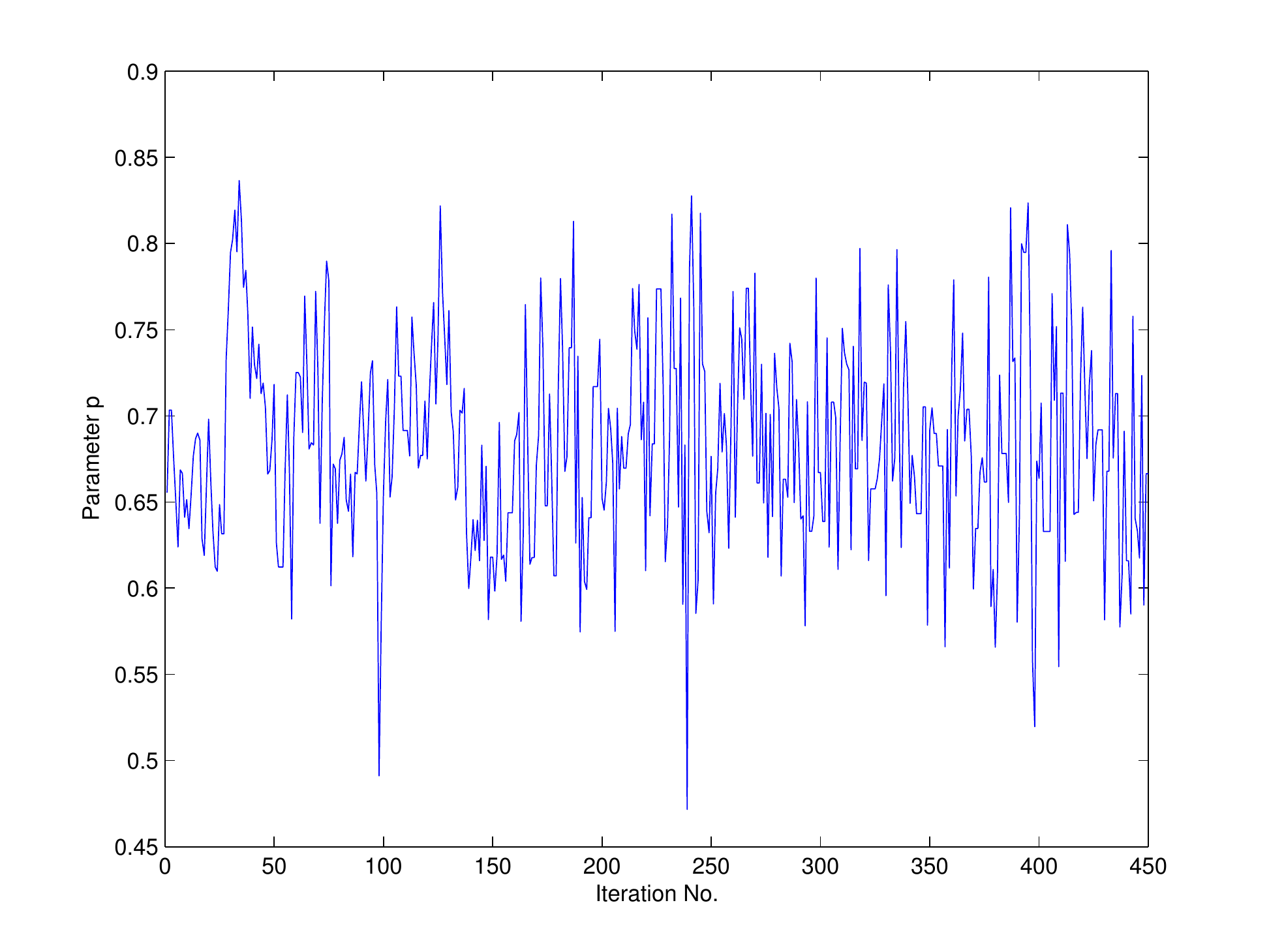}} 
\subfigure[SMC]
{\includegraphics[width=6.5cm,height=6cm]{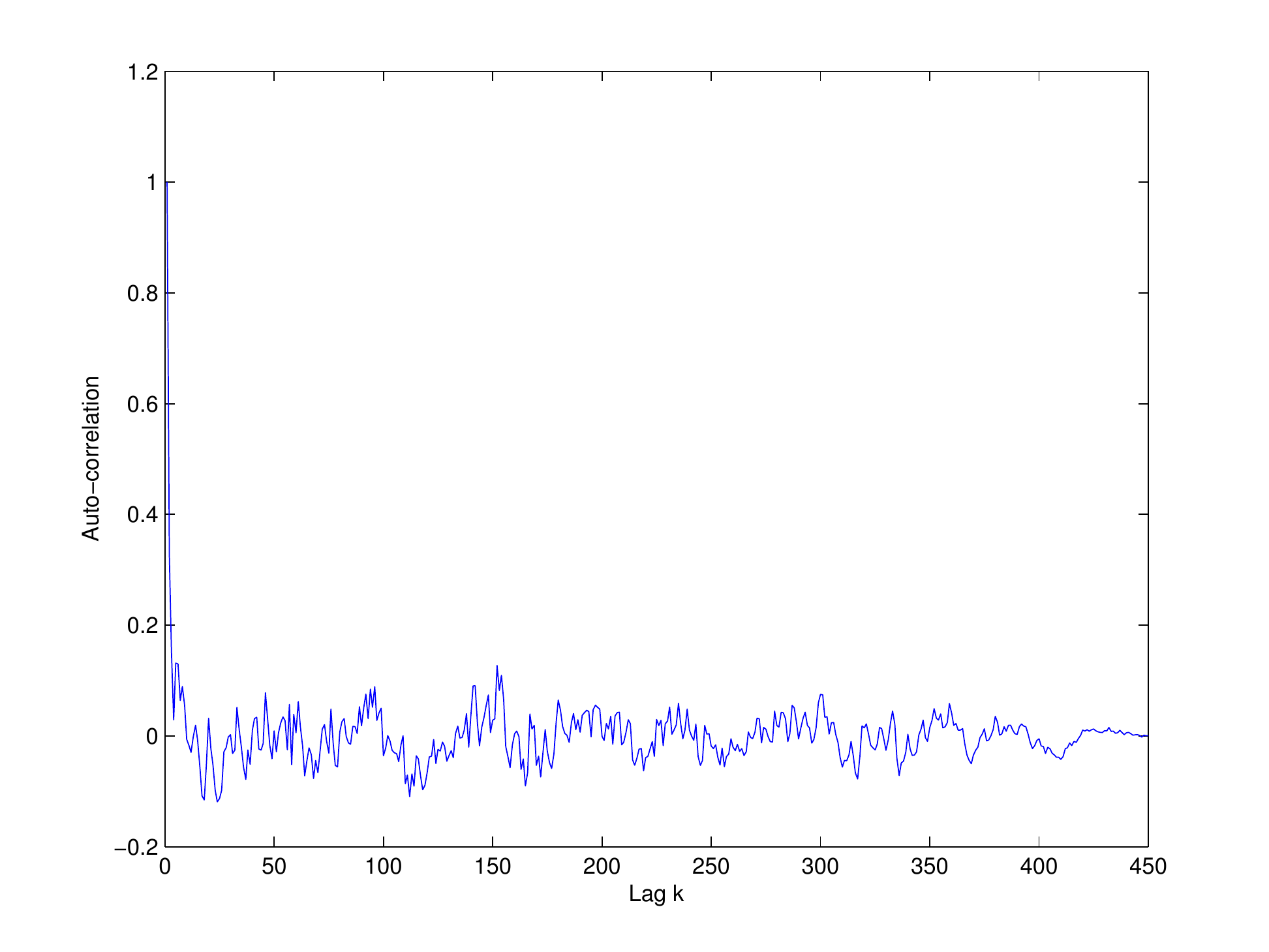}}
\subfigure[DPF + SMC]
{\includegraphics[width=6.5cm,height=6cm]{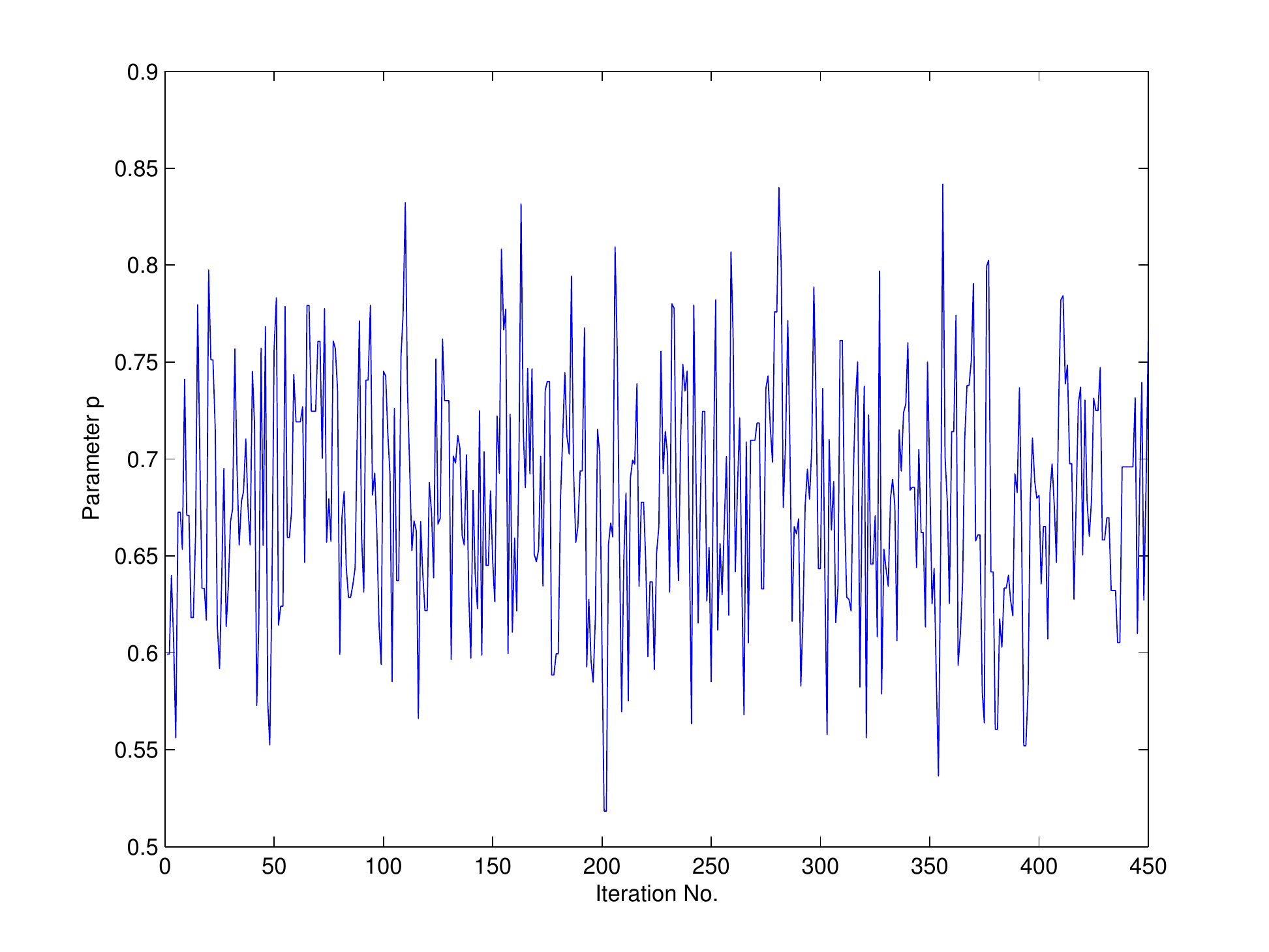}} 
\subfigure[DPF +SMC]
{\includegraphics[width=6.5cm,height=6cm]{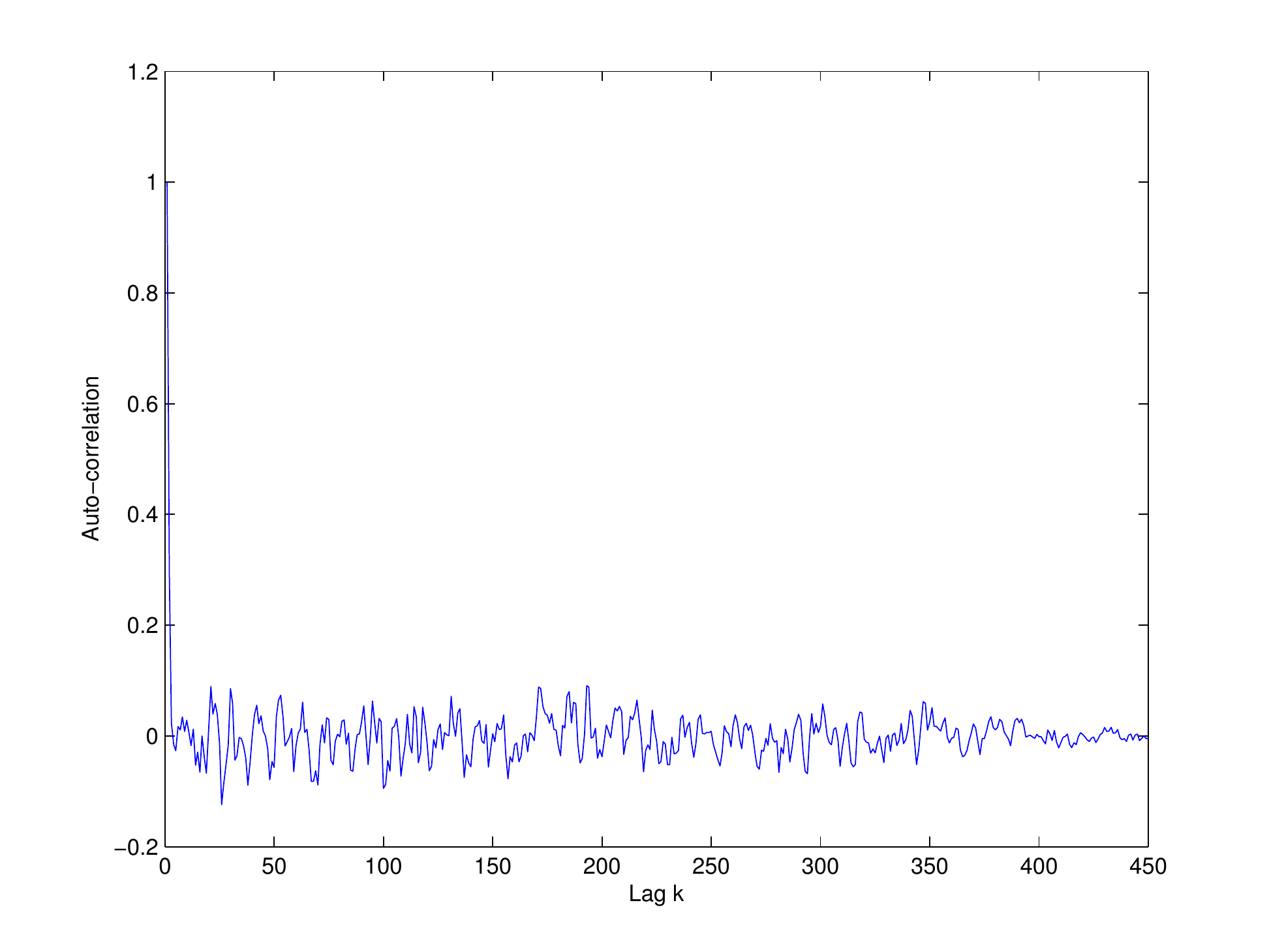}}
\caption{ \small PMCMC plots. 5000 iterations are run with a 500 iteration burn-in and we display every fifth sample.  This is for the larger network.}  
\label{fig:smcanalysis}
\end{figure}

\section{Summary}\label{sec:summary}

In this article we have considered computational methods for network models. We considered two extensions to IS 
for estimating the likelihood (for a given parameter)
for a class of network models; namely SMC and the DPF. It was then shown how these algorithms can be embedded into MCMC to perform paramater inference.
As the relative variance of the IS estimate of the likelihood typically grows at an exponential rate in the number of removable nodes, this was the main motivation
for using the two alternative approaches. We have shown that the relative variance of the SMC method will only grow at a polynomial rate in the number removable nodes.
We then illustrated these methods on small to medium sized networks and showed that the DPF and DPF inside MCMC seemed to perform better versus the SMC based versions.
In general, however, the computational time was much higher and this value was quite high for each of our algorithms.

Future work of interest is as follows. We have shown that the SMC/MCMC methods of this article seem to work well for small to medium size networks.
However, for larger networks, both memory and computational demands increase which makes it less attractive to implement these exact computational methods.
Whilst there are computational tricks to help implementation Jacob et al.~(2013) or Lee et al.~(2010) it may be preferable to make principled statistical approximations of the models (for example as in Jasra et al.~(2011))
to reduce the computational burdens. This is something that we are currently investigating.

\subsubsection*{Acknowledgements}

The second author was supported by an MOE Singapore grant.

\appendix

\section{Relative Variance Result}

In this appendix, we omit all reference to the parameter $\theta$ and we assume that our assumptions are global w.r.t~$\theta$, which will typically imply that $\Theta$ is compact.
Recall that the algorithm resamples at each time and we will assume that this is using the multinomial method.
The proof of our result relies heavily on the work in C\'erou et al.~(2011) and in order to easily verify our proof, we will use the Feynman-Kac notations in that article.
We introduce the Markov kernel $M_k:\mathsf{V}_{t-k+1:t}\rightarrow\mathcal{P}(\mathsf{V}_{t-k:t})$, $1\leq k \leq t-t_0-1$, where $\mathcal{P}(\mathsf{V})$ are the collection of probabilities on a set $\mathsf{V}$.
The Markov kernel is, for $x_{k-1}\in \mathsf{V}_{t-k+1:t}$, $x_k\in \mathsf{V}_{t-k:t}$, $x_k=(\tilde{x}_{k},x_k')\in \mathsf{V}_{t-k+1:t}\times\mathsf{W}_{t-k}(\tilde{x}_{k})$
$$
M_k(x_{k-1},x_k) = \delta_{x_{k-1}}(\tilde{x}_{k}) q(x_k'|\tilde{x}_{k})
$$
where the conditional probability $q$ is as described in the algorithm of Section \ref{sec:smc} (always assumed to be positive, when $x_k'\in\mathsf{W}_k(\tilde{x_k})$). Further we use $G_k:\mathsf{V}_{t-k:t}\rightarrow \mathbb{R}^+$ $0\leq k \leq t-t_{0}-1$, to denote the importance weights $w_k(\cdot)$ in Section \ref{sec:smc}.
We make the following hypotheses. Note that if $x\in\mathsf{V}_{t-k+1:t}$ and one makes a transition via $M_k$, then the produced state is $y=(x,y')$, with $y'\in\mathsf{W}_k(x)$.

\begin{hypA}\label{assumption}
For $t-t_0$ fixed, there exist a $0<\xi(t-t_0)<+\infty$ such that
$$
\sup_{0\leq k \leq t-t_0-1} \sup_{(x,y)\in\mathsf{V}_{t-k:t}^2} \frac{G_k(x)}{G_k(y)} \leq \xi(t-t_0).
$$
For each $1\leq k \leq t-t_0-1$, and $t-t_0$ fixed there exists a $\xi_k(t-t_0)\in[1,\infty)$ such that for any $(x,u) \in \mathsf{V}_{t-k+1:t}^2$, $(y',v')\in \mathsf{W}_k(x)\times\mathsf{W}_k(u)$, $(y,v) = ((x,y'),(u,v'))$
$$
M_k(x,y) \leq \xi_k(t-t_0) M_k(u,v).
$$
\end{hypA}
The assumption simply says that the importance weights are lower-bounded away from zero and upper-bounded, and that these bounds are uniform in the time parameter; the bound can depend upon the number of removable nodes, which happens in practice. In addition, the assumption on $M_k$ is quite reasonable as the algorithm evolves upon a finite state-space.
We note that the constant $\xi_k(t-t_0)$ can depend upon the number of removable nodes, which, again, one might expect in practice.
We will use the notation $Q_k(x,y) = G_{k-1}(x) M_k(x,y)$, with $1\leq k \leq t-t_0-1$ and use the semi-group notation
$$
Q_{k,n}(x_k,x_n) = \sum_{x_{k+1},\dots,x_{n-1}} Q_{k+1}(x_k,x_{k+1}) \dots Q_n(x_{n-1},x_n), \quad k <n.
$$
$Q_{k,k}$ is the identity. Let $\varphi:\mathsf{V}_{t-n:t}\rightarrow\mathbb{R}$ be any bounded function, we use the notation $Q_{k,n}(\varphi)(x_k) = \sum_{x_n} Q_{k,n}(x_k,x_n)\varphi(x_n)$.

\begin{proof}[Proof of Proposition \ref{prop:rel_var}]
We simply need to show that for $k<n$, 
$$
\sup_{(x,y)\in\mathsf{V}_{t-k:t}^2} \frac{Q_{k,n}(1)(x)}{Q_{k,n}(1)(y)} \leq \xi(t-t_0)\xi_{k+1}(t-t_0)
$$
the proof will then follow directly from Theorem 5.1 and Corollary 5.2 of C\'erou et al.~(2011). Noting that
$Q_{k,n}(\varphi)(x_k) = Q_{k+1}(Q_{k+2,n}(\varphi))(x_k)$, it will suffice to show that, for any positive and bounded function $\varphi:\mathsf{V}_{t-k-1:t}\rightarrow\mathbb{R}^+$, we have
\begin{equation}
\sup_{(x,y)\in\mathsf{V}_{t-k:t}^2} \frac{Q_{k}(\varphi)(x)}{Q_{k}(\varphi)(y)} \leq \xi(t-t_0)\xi_{k+1}(t-t_0) \label{eq:prf_eq}.
\end{equation}

We note that by (A\ref{assumption}), it follows that for any $(x,y)\in\mathsf{V}_{t-k:t}^2$
\begin{equation}
\frac{Q_{k}(\varphi)(x)}{Q_{k}(\varphi)(y)} \leq \xi(t-t_0) \frac{M_{k+1}(\varphi)(x)}{M_{k+1}(\varphi)(y)} \label{eq:prf_eq1}.
\end{equation}
So, now consider for any $x\in\mathsf{V}_{t-k}$
\begin{eqnarray*}
M_{k+1}(\varphi)(x) & = & \sum_{u\in\mathsf{V}_{t-k-1:t}} M_{k+1}(x,u) \varphi(u)  \\
& = & \sum_{u\in\mathsf{V}_{t-k-1:t}} M_{k+1}(x,u)\mathbb{I}_{\{x\}\times\mathsf{W}_{k+1}(x)}(u) \varphi(u) \\
&\leq & \xi_{k+1}(t-t_0)M_{k+1}(y,v) \mathbb{I}_{\{y\}\times\mathsf{W}_{k+1}(y)}(v) \sum_{u\in\mathsf{V}_{t-k-1:t}}\mathbb{I}_{\{x\}\times\mathsf{W}_{k+1}(x)}(u) \varphi(u).
\end{eqnarray*}
Then on multiplying both sides of the inequality by $\varphi(v)$ and summing w.r.t.~$v$, we have
\begin{eqnarray*}
M_{k+1}(\varphi)(x) \sum_{v\in \mathsf{V}_{t-k-1:t}} \varphi(v)  & \leq & \xi_{k+1}(t-t_0) \bigg(\sum_{v \in \mathsf{V}_{t-k-1:t}} M_{k+1}(y,v) \mathbb{I}_{\{y\}\times\mathsf{W}_{k+1}(y)}(v) \varphi(v) \bigg)\times \\ & & \sum_{u\in\mathsf{V}_{t-k-1:t}}\mathbb{I}_{\{x\}\times\mathsf{W}_{k+1}(x)}(u) \varphi(u) \\
& \leq & \xi_{k+1}(t-t_0) M_{k+1}(\varphi)(y)  \sum_{u\in\mathsf{V}_{t-k-1:t}} \varphi(u).
\end{eqnarray*}
Hence we shown that for any $(x,y)\in\mathsf{V}_{t-k:t}^2$
$$
\frac{M_{k+1}(\varphi)(x)}{M_{k+1}(\varphi)(y)} \leq \xi_{k+1}(t-t_0).
$$
On noting \eqref{eq:prf_eq1}, \eqref{eq:prf_eq} and the above arguments, the proof is completed.
\end{proof}

\end{document}